%%%% ijcai21.tex

\typeout{IJCAI--21 Instructions for Authors}

% These are the instructions for authors for IJCAI-21.

\documentclass{article}
\pdfpagewidth=8.5in
\pdfpageheight=11in
% The file ijcai21.sty is NOT the same than previous years'
\usepackage{ijcai21}

% Use the postscript times font!
\usepackage{times}
\usepackage{soul}
\usepackage{url}
\usepackage[hidelinks]{hyperref}
\usepackage[utf8]{inputenc}
\usepackage[small]{caption}
\usepackage{graphicx}
\usepackage{amsmath}
\usepackage{amsthm}
\usepackage{booktabs}
\usepackage{algorithm}
\urlstyle{same}

% the following package is optional:
%\usepackage{latexsym}

% See https://www.overleaf.com/learn/latex/theorems_and_proofs
% for a nice explanation of how to define new theorems, but keep
% in mind that the amsthm package is already included in this
% template and that you must *not* alter the styling.

%\usepackage[noadjust]{cite}
%\usepackage{biblatex}
%\usepackage[a-1b]{pdfx} 
\usepackage{amssymb,amsfonts}
\usepackage{textcomp}
\usepackage{xcolor}

\usepackage{comment}
\usepackage{url}
\usepackage[T1]{fontenc}
\usepackage[utf8]{inputenc}
\usepackage{blindtext}
\usepackage{pdfpages}

%\usepackage{times}
%\usepackage[numbers]{natbib}
%\PassOptionsToPackage{square,numbers}{natbib}
%\usepackage[title,titletoc]{appendix}
%\usepackage[dvipdfmx]{color}
%\usepackage[dvipdfmx]{graphicx}
\usepackage{here}
\usepackage{xcolor}
\usepackage{fancyhdr}
\excludecomment{hdn}
%\usepackage{xr}
%\externaldocument{body}
\usepackage[title,titletoc]{appendix}

\usepackage{ascmac}
\usepackage{bm,enumerate,amsmath,amssymb,amsthm}
\newtheoremstyle{assumption}% name
  {\topsep}% space above
  {\topsep}% space below
  {}% body font
  {1em}% indent amount
  {\itshape}% theorem head font
  {}% punctuation after theorem head
  {.5em}% space after theorem head
  {\thmname{#1}\thmnumber{ #2}\thmnote{ (#3)}}% theorem head spec
\theoremstyle{assumption}

\newtheorem{lemma}{Lemma}
\newtheorem{theorem}{Theorem}

\newtheorem{minortheorem}{Proposition}

\usepackage[noend]{algpseudocode}
\algnewcommand\Input{\item[{\textbf{Input:}}]}
\algnewcommand\Output{\item[{\textbf{Output:}}]}
\algrenewcommand\algorithmicdo{}
\algnewcommand\algorithmicto{\textbf{to}}
\algnewcommand\algorithmicbreak{\textbf{break}}
\algdef{SE}[FOR]{ForTo}{EndFor}[2]{\algorithmicfor\ #1\ \algorithmicto\ #2\ \algorithmicdo}{\algorithmicend\ \algorithmicfor}%
\algtext*{EndFor}
\let\oldReturn\Return
\renewcommand{\Return}{\State\oldReturn}

\def \vect#1{\mbox{\boldmath $#1$}}
\newcommand{\mc}{\mathcal}

\newcommand{\argmin}{\mathop{\rm argmin~}\limits}

\newcommand{\newcolor}{black}

\newcommand{\modaaai}[1]{\textcolor{black}{#1}}
\newcommand{\modified}[1]{\textcolor{black}{#1}}
\newcommand{\extref}[1]{Equation #1 in the paper}

\def\arxiv{1}
\newcommand{\cameracolor}{black}

\if\arxiv0
\newcommand{\camera}[1]{\textcolor{\cameracolor}{#1}}
\else
\newcommand{\camera}[1]{\textcolor{\cameracolor}{#1}}
\fi
\newcommand{\urlblue}[1]{\textcolor{blue}{#1}}

% Following comment is from ijcai97-submit.tex:
% The preparation of these files was supported by Schlumberger Palo Alto
% Research, AT\&T Bell Laboratories, and Morgan Kaufmann Publishers.
% Shirley Jowell, of Morgan Kaufmann Publishers, and Peter F.
% Patel-Schneider, of AT\&T Bell Laboratories collaborated on their
% preparation.

% These instructions can be modified and used in other conferences as long
% as credit to the authors and supporting agencies is retained, this notice
% is not changed, and further modification or reuse is not restricted.
% Neither Shirley Jowell nor Peter F. Patel-Schneider can be listed as
% contacts for providing assistance without their prior permission.

% To use for other conferences, change references to files and the
% conference appropriate and use other authors, contacts, publishers, and
% organizations.
% Also change the deadline and address for returning papers and the length and
% page charge instructions.
% Put where the files are available in the appropriate places.

%PDF Info Is REQUIRED.
\pdfinfo{
/TemplateVersion (IJCAI.2021.0)
}

\title{A Polynomial-time, Truthful, Individually Rational and Budget Balanced Ridesharing Mechanism}

% Single author syntax
\iffalse
\author{
    paper 3927
}
\else
% Multiple author syntax (remove the single-author syntax above and the \iffalse ... \fi here)
% Check the ijcai21-multiauthor.tex file for detailed instructions
\author{
Tatsuya Iwase$^1$
\and
Sebastian Stein$^2$
\and
Enrico H. Gerding$^2$
\affiliations
$^1$Toyota Motor Europe NV/SA, Zaventem, Belgium\\
$^2$University of Southampton, Southampton, United Kingdom\\
\emails
tiwase@mosk.tytlabs.co.jp,
\{ss2, eg\}@ecs.soton.ac.uk
}
\fi

\begin{document}

\maketitle

\if\arxiv1
\thispagestyle{fancy}
\pagestyle{fancy}
\fancyhf{}
\renewcommand{\headrulewidth}{0pt}
\lfoot[Copyright International
Joint Conferences on Artificial Intelligence (\href{https://ijcai.org}{IJCAI}), 2021. All rights
reserved.]{Copyright International
Joint Conferences on Artificial Intelligence (\href{https://ijcai.org}{\urlblue{IJCAI}}), 2021. All rights
reserved.}

\fi

\begin{abstract}
Ridesharing has great potential to improve transportation efficiency while reducing congestion and pollution. To realize this potential, mechanisms are needed that allocate vehicles optimally and provide the right incentives to riders. However, many existing approaches consider restricted settings (e.g., 
only one rider per vehicle
or a common origin for all riders). Moreover, 
naive applications of standard approaches, such as the Vickrey-Clarke-Groves or greedy mechanisms, cannot achieve a polynomial-time, truthful, individually rational and budget balanced mechanism. To address this, we formulate a general ridesharing problem and apply mechanism design to develop a novel mechanism which satisfies all four properties and whose social cost is within 8.6\% of the optimal on average.
\if\arxiv0
\renewcommand{\thefootnote}{}
\footnote[0]{\camera{An extended version including technical appendices is available at \href{https://arxiv.org/abs/2105.11292}{https://arxiv.org/abs/2105.11292}.}}
\renewcommand{\thefootnote}{\arabic{footnote}}
\fi
\end{abstract}

\setcounter{section}{0}

\section{Introduction}
\label{sec:intro}

Ridesharing can meet more traffic demand with fewer vehicles than existing taxi services or privately-owned vehicles. At the same time, it can reduce traffic congestion, energy consumption and pollution. Moreover, with the introduction of autonomous vehicles, unused vehicles become active transportation resources, and ridesharing can further increase utilization of such vehicles
%Studies show it is possible to reduce the number of vehicles by a third when combining the two 
\cite{Spieser2014}. 
%\modified{It also provides a way to design more dynamic and flexible public transportation options that do not need to rely on fixed routes and timetables.} 

Broadly, there are two research streams addressing ridesharing problems: \modified{the first }one focuses on algorithms for assigning vehicles to riders while solving the routing problem in an optimal way \cite{Laporte2007,Hasan2018,Chen2019}.
%; and one that focuses on the incentives and payments using mechanism design. 
%In terms of the former, early work on specific algorithms includes the Dial-a-Ride Problem (DARP) \cite{Laporte2007}, 
%\cite{Laporte2007,Agatz2012}, 
%which can be formulated as an integer linear program (ILP). Other studies formulate the ridesharing problem as a mixed integer program (MIP) \cite{Ghosh2016,BistaffaFilippo2018,Hasan2018}. To deal with the computational complexity, some studies consider the problem at a higher level and omit the computation of the routing \cite{DickersonJohnP.2018,Chen2019}.
%Yet other studies restrict the set of feasible solutions using knowledge of proximity among vehicles and riders, thereby reducing the search space \cite{Biswas,Pelzer2015,Bistaffa2015}. These studies only consider settings where riders can use a single vehicle to complete the journey.
However, none of these studies consider incentives and whether it is in the best interest of riders to follow the proposed solution and to report their preferences truthfully. Especially when the origins and  destinations of the riders are different, they are sometimes required to detour to share the vehicle with others. Hence, there may be an incentive to misreport time constraints and preferences to prevent detours.

\color{\newcolor}
To address this, the second research stream focuses on mechanism design approaches, which incentivize truthful reporting and  avoid strategic manipulation. In addition to this property, which is called incentive compatibility (IC), other desirable properties include (weak) budget balance (BB, i.e., the system does not require an outside subsidy),  individual rationality (IR, i.e., an individual is never worse off participating in the mechanism) and scalability (i.e., the allocation and any payments can be computed in polynomial time). %Moreover, we focus on settings that are capable both manual and autonomous vehicles. 
%Mechanism design for ridesharing is still largely an open problem. 
An early approach uses a second price auction \cite{Kleiner}, but their work is limited to single riders per vehicle and does not allow  cooperation between drivers. \cite{Crandall2001}  propose an online posted-price mechanism but they only consider autonomous vehicles (vehicles can move without any riders allocated) and their approach is not scalable. Furthermore, \cite{Cheng2014} propose a mechanism that satisfies IC, BB and IR, but they assume all riders depart from the same origin and so their method does not coordinate the pickup of riders at different locations. \cite{Zhao} propose a Vickrey–Clarke–Groves (VCG) mechanism, but their approach is not scalable and BB is only satisfied when there is no detour. More recent work on mechanism design, e.g., \cite{Rheingans-yoo2019,LuAlicePeterI.Frazier2018,MaHongyaoFeiFang2019}, only focuses on drivers and does not consider rider incentives. Also, none of the above consider settings where riders can switch vehicles which can enable more efficient solutions, especially in settings with limited vehicles as in suburban areas. 
\color{black}

In short, to date there exist no scalable mechanisms for general ridesharing settings that satisfy IC, IR and BB. To address this shortcoming, we make the following novel contributions:
\begin{itemize}
    \item We formulate a more general ridesharing problem compared to existing approaches. In particular, this is the first model that \modified{considers both autonomous and non-autonomous vehicles and where riders can switch between vehicles if it is beneficial to do so.} %In other words, it provides a way to design more dynamic and flexible public transportation than the current buses with fixed routes, which play an important role in suburban areas.}
    %allows agents to be both riders and drivers, considers manual as well as autonomous vehicles (where vehicles can drive to pick up riders without any agents being allocated to them), 
    %and where riders can switch vehicles.
    %, and where riders can switch vehicles if this is more efficient.
    %which, for the first time, takes into consideration fuel costs, as well as the driver's value of time, and includes both manual and autonomous vehicles as well as taxis for outside options. Unlike existing work, riders can also switch vehicles if this is more efficient.
    %\item We prove that, unsurprisingly, computing the optimal allocation is NP-hard. Nevertheless, by introducing some weak assumptions, we show that it can be to reduced to a polynomial-time linear program.
    \item \modified{We show that approaches based on VCG mechanisms or naive greedy (where riders are allocated in order of their marginal cost) do not satisfy the necessary properties. Specifically, the standard VCG approach does not satisfy BB and cannot be computed in polynomial-time. Also, naive greedy violates a property called monotonicity, meaning it fails to satisfy IC.}
    %Though the VCG mechanism can be computed efficiently with our algorithm, it fails to satisfy budget balance. Meanwhile, though it is possible to implement a budget balanced VCG mechanism, it sacrifices the efficient computation.
    %\item Using the algorithm, we propose a ridesharing mechanism based on the VCG mechanism that can be computed efficiently, but we show this is not (weakly) budget balanced. We therefore propose a budget balanced version of the VCG mechanism, but this comes at a higher computational cost.
    \item To address this, \modified{we propose a general methodology to design monotone greedy allocation mechanisms. We then apply this theory to produce a novel two-stage mechanism with carefully designed payments to ensure both IR and BB, as well as IC and scalability. }
    %we focus on greedy mechanisms. We clarify that mechanisms based on naive greedy allocations fail to satisfy truthfulness. Then
    %we characterize monotone allocations for greedy mechanisms, which is key in achieving truthfulness. 
    %We apply this approach to develop a greedy mechanism that is polynomial-time, truthful, individually rational and budget balanced.
    %According to \cite{myerson1983efficient}, no mechanism can achieve optimality, truthfulness, individual rationality and budget balance at the same time. In this sense, our mechanism achieves a theoretical limit and can be a standard mechanism for ridesharing.
	%\item To address this, we propose a budget balanced greedy-based mechanism that can be computed efficiently and show that it is monotonic and therefore incentive compatible.
%	\item We analyze the impact of autonomous vehicles replacing drivers on the usage of ridesharing systems.
	\item We empirically evaluate the performance of different mechanisms and show the advantages of our methods. Our monotone greedy mechanism satisfies all desirable properties with typically less than a 9\% social cost increase compared to the optimal allocation.
\end{itemize}
In what follows, Section~\ref{sec:model} introduces the ridesharing model and its mechanism design formulation. Section~\ref{sec:vcg} considers the VCG mechanism and its budget balanced version. Section~\ref{sec:greedy} studies the greedy mechanisms. Section~\ref{sec:exp} provides an empirical evaluation of the mechanisms and Section~\ref{sec:conclusions} concludes.  Due to space limitations, most proofs for the theoretical results are provided in \camera{the extended version of the paper}. 

%For the remainder of the paper, we start by presenting the ridesharing model, then present the VCG ridesharing mechanism followed by the budget balanced version. We then present a general theory to design monotone allocations for greedy mechanisms, and a greedy ridesharing mechanism. We conclude after presenting experimental results.

%For the remainder of the paper organized as follows. We start by presenting the ridesharing model. We then present the ridesharing mechanism based on the VCG mechanism and show the computational hardness result. Following this, we present the budget balanced version. \modaaai{We then present a general theory to design monotone allocations for greedy mechanisms, and apply the theory to design a greedy ridesharing mechanism.}
%We then present our experimental results. We conclude in the last section.
%Finally, the formulation of the ridesharing problem as a linear program is shown in the supplementary material.

\section{The Model}
\label{sec:model}
%Our objective is to design a mechanism that allocates vehicles to riders, calculates routes, and also computes payments to compensate for any detours and to incentivize truthful behaviour. 
In this section, we first provide a formal model of the ridesharing problem followed by an example. %\modaaai{The model is flexible enough to describe situations such that agents can be either riders or drivers, that vehicles need drivers or are autonomous, and that riders can switch vehicles on their way.} 
We then formulate this as a mechanism design problem.

\subsection{The Ridesharing (RS) Problem}
\label{sec:pmodel}

We assume finite time steps  $\mc{T}=\{0,1,\ldots,T\}$, set of riders  $\mc{N}=\{1,2,\ldots,N\}$ and vehicles $\mc{K}=\{1,2,\ldots,K\}$.  Let $w \in \mathbb{N}_{+}$ denote the vehicle capacity (i.e., the maximum number of riders in a vehicle). The road network is represented by a directed graph $\mc{G}=(\mc{V},\mc{E})$, where edges $\mc{E}$ denote road segments, vertices $\mc{V}$ denote locations, \modified{and loops $(v,v) \in \mc{E}$ are used to model stationary vehicles}.  
\modified{The length} of an edge $e\in \mc{E}$ is given by $l_{e}$ and denotes the travel distance. For simplicity \modified{we assume $l_{(v,v')}=1$ if $v\not=v'$, and $l_{(v,v)}=0, \forall (v,v')\in \mc{E}$.\footnote[1]{The former ensures that a rider is always at a vertex at each time step. Furthermore, since $l_{e}$ is used to compute costs (discussed below), the latter ensures that stationary vehicles do not incur fuel costs. Longer roads can be represented by joining multiple edges.}}
%We assume that the travel time of edge $e\in \mc{E}$ is equal to the length of the edge $l_{e} \in \mathbb{N}$.
Let $O_i^p,D_i^p \in \mc{V}$ denote the origin and the destination of rider $i\in \mc{N}$. Also, let $O_k^c$ denote the initial location of vehicle $k \in \mc{K}$, which does not have to be where riders are (in case of autonomous vehicles). We denote by $\mc{P}$  the set of all paths in $\mc{G}$. A \emph{path} is a vector of size $S$, denoted by $\langle v\rangle_S \equiv \langle v_1,\ldots,v_S\rangle$, where an element $v_s \in \mc{V}$ is the $s$-th location in the path and $(v_s,v_{s+1}) \in \mc{E}, \forall s < S$. Since paths do not include information of time, we define 
%, that have total travel time less than $T$.
a \emph{route} as a tuple of vectors $\langle \langle v\rangle_S,\langle t\rangle_S\rangle$ where the path departs from location $v_s$ at time $t_s$. Route $\langle \langle v\rangle_S,\langle t\rangle_S\rangle$ satisfies $t_{s+1} = t_{s}+l_{(v_s,v_{s+1})}$ if $v_s \not =v_{s+1}$. If $v_s=v_{s+1}$, it means the rider or vehicle stays at $v_s$ for an amount of time given by $t_{s+1}-t_s$ (e.g., to wait for a vehicle or because they already arrived at the destination). The set of routes is denoted by $\mc{R}$. The route of rider $i$ and the route of vehicle $k$ are denoted by $r_{i}^p, r_{k}^c \in \mc{R}$, respectively. 
We let ${\rm edge}(r,t) \in \mc{E}$ be a function that returns the edge in route $r$ that the rider or vehicle is traversing at $t$. E.g., ${\rm edge}(\langle \langle A,B,C\rangle,\langle 0,1,2\rangle \rangle,1)=(B,C)$.  Similarly, ${\rm loc}(r,t) \in \mc{V}$ returns the location at $t$. E.g., ${\rm loc}(\langle \langle A,B,C\rangle,\langle 0,1,2\rangle \rangle,1)=B$.
%We let ${\rm each}(r)$ be a function that returns a list of every $(e,t)$ included in route $r$, such that the rider or vehicle start traversing edge $e$ at $t$. 
Riders cannot move without vehicles but vehicles can move without riders if they are autonomous. If they are not autonomous, then they need at least one rider. %We do not always assume autonomous vehicles but only in some cases. 

We denote the \emph{assignment} of vehicles to riders with \modified{tensor} $B\in\mc{B}=\{0,1\}^{T\times N\times K}$. $B[t,i,k]=1$ is possible only if the path of rider $i$ is the same as vehicle $k$ at time $t$. %That is, a vehicle can be assigned to a rider for a move on edge, only when the departure timings are the same. 
The assignment may change over time, which means a rider can switch vehicles on the way\footnote[2]{In practice, this can take time, but we ignore this for simplicity. The model can  easily be extended by introducing waiting times for vehicles when riders switch \camera{at some designated safe areas}.}. We do not specifically distinguish between drivers and riders. Any rider in a vehicle can be a driver if the vehicle is not autonomous. Furthermore, riders can use a taxi as the outside option \modified{($B[t,i,k]=0,~\forall t\in\mc{T},k\in\mc{K}$)} if it is more beneficial than using ridesharing. We assume that the waiting time for a taxi is zero and the taxi takes the shortest path between the origin and destination of the rider. However, a rider cannot combine a taxi with ridesharing. 

An overall \emph{allocation} is denoted by
$\pi =\langle \vect{r}^p,\vect{r}^c,B \rangle \in \mc{R}^N\times\mc{R}^K\times\mc{B}$, where $\vect{r}^p=\langle r_{i}^p \rangle_{i\in\mc{N}}, \vect{r}^c=\langle r_{k}^c \rangle_{k \in \mc{K}}$ are the routes for riders and vehicles,  respectively. We will also use $r^p_i(\pi)$, $r^c_k(\pi)$ to denote a route for rider $i$ and vehicle $k$ given allocation $\pi$.  An allocation is feasible if, for each rider $i$, the route $r^p_i$ consist of a path between $O_i^p$ and $D_i^p$, and the vehicle capacity is not exceeded.  Formally, the set of \emph{feasible allocations} is defined by:

%\begin{equation*}
%\left.
\begin{align}
\Pi=&\{\langle \vect{r}^p,\vect{r}^c,B \rangle  \in \mc{R}^N\times\mc{R}^K\times\mc{B} |  \nonumber\\
&\forall t \in \mc{T}, i\in \mc{N}, k\in\mc{K}, r_i^p, r_k^c \in \mc{R}: \nonumber\\
 &{\rm loc}(r^p_i,0)=O_i^p,~{\rm loc}(r^p_i,T)=D_i^p,~{\rm loc}(r^c_k,0)=O_k^c, \nonumber\\
 & B[t,i,k]=1 \Rightarrow {\rm edge}(r_k^c,t)={\rm edge}(r_i^p,t),\nonumber\\
 & \exists t' \in \mc{T}, \forall k' \in \mc{K}: (B[t',i,k']=0, l_{{\rm edge}(r_i^p,t')}>0) \nonumber\\
 &\qquad \Rightarrow B[t,i,k]=0, \nonumber\\
&\sum_{i'\in \mc{N}}B[t,i,k]\leq w \}.
\end{align}
%\right.
%\label{eq:feas}
%\end{equation*}
The first line of constraints specifies origins and destinations of the routes. The second constraint is the definition of $B$. The third constraint means that a taxi cannot be combined with ridesharing (if riders traverse an edge using a taxi, they always use a taxi). The fourth is the capacity constraint.

If vehicles are not autonomous or have no dedicated driver, $\Pi$ needs an additional condition that any moving vehicle needs at least one rider: $l_{{\rm edge}(r_k^c,t)}>0 \Rightarrow \sum_{j\in \mc{N}}B[t,j,k]\modaaai{>} 0$.

Now that we have defined the allocation, we proceed with defining the costs which the system aims to minimise. These costs include travel and waiting time, fuel costs and taxi labour costs. Formally, the travel time of rider $i$ including waiting time is denoted by $T_i(\pi)=\min\{t_s|r_i^p=\langle \langle v\rangle_S,\langle t\rangle_S\rangle, v_s =D_i^p\}$\modified{, which is the earliest time to reach the destination}. Furthermore, let $T_i^0$ denote the travel time for rider $i$ using the shortest path (e.g., taxi).
%If ${\rm path}(O,D)$ denotes a shortest paths between $O \in \mc{V}$ and $D \in \mc{V}$, the shortest travel time of $i$ is denoted by $T_i^0=\sum_{e \in {\rm path}(O_i^p,D_i^p)}l_{e}$. 
The travel time of vehicle $k$ is denoted by 
$T_k^c(\pi)=\sum_{t \in \mc{T}}l_{{\rm edge}(r_k^c(\pi),t)}$.
%$T_k^c(\pi)=\sum_{(e,t) \in {\rm each}(r_k^c(\pi))}l_{e}$. 
Furthermore, let $\alpha,\beta \in \mathbb{R}$ denote the (additional) labour cost per time step for a taxi driver and fuel cost per distance for a vehicle, respectively. Let $\gamma_i \in \Gamma=[0,\gamma_{max}]$ represent the value of time, i.e., a subjective  cost rider $i$ is willing to pay to avoid a unit of travel time. Then the cost of rider $i$ is: $c_i(\pi)=c_i^0=(\alpha+\beta+\gamma_i)T_i^0$ when using a taxi, and $c_i(\pi)=\gamma_{i}T_i(\pi)$ when using ridesharing.   In addition, the fuel cost of the ridesharing system is calculated separately and is given by $c_F(\pi)=\sum_{k \in \mc{K}}\beta T_k^c(\pi)$. Given this, the {\it ridesharing (RS) problem} is to find a feasible allocation $\pi\in\Pi$ that minimizes the social cost including fuel cost $C_F(\mc{N},\pi)=\sum_{i \in \mc{N}}c_i(\pi)+c_F(\pi)$.

%\subsection{Example}
%\commenteg{The figure does not really contain much information. For example, it does not show $T=3$ or anything to do with time or the allocation. Can we include additional information such as the allocation/assignment, or more notation to make the figure more useful? Otherwise we might get similar comment to before. Finally, the graph seems a directional graph, but don't see this mentioned before in the model.}
To illustrate the model, \modified{Figure \ref{fig:tradgreedy} shows an example with $N=2, K=2$ and $T=5$. The two figures show two possible allocations: in left figure, for example, $r_1^p=\langle \langle A,A,B\rangle,\langle 0,3,4\rangle \rangle$  and $ r_2^p=\langle \langle B,B,C\rangle,\langle 0,1,2\rangle \rangle$. Rider 2 is assigned to vehicle 1 at $t=1$, i.e., $B[1,2,1]=1$ while rider 1 is not assigned to vehicle 2  until $t=3$, i.e., $B[3,1,2]=1$. If $\gamma_1=2$ and $\gamma_2=1$, then the optimal allocation is the one on the right, with $T_i=T_k^c=3$ and social cost $C_F=15$.}

\subsection{Mechanism Design Formulation}
\label{sec:mmodel}
A mechanism takes reports from riders on their privately-known types as input, and computes the allocation and payment as output. In our model, we assume $O_i^p$ and $D_i^p$ are easy to track \modified{with GPS} and so $\gamma_i$ is the only private information, which is referred to as rider $i$'s \emph{type}. Let $\vect{\gamma}\in \Gamma^N$ be the type profile of all riders, $-i$ the set of riders except for $i$, and $\vect{\gamma}_{-i}$ the types of all riders except $i$.
%(in theory, riders could misreport the destination and walk to $D_i^p$, but such a manipulation would normally be very costly)
%$\theta_i=(O_i^p,D_i^p,ta_i,\gamma_i)$. $ta_i \in \mc{T}$ is the latest acceptable arrival time.
For simplicity, we assume that all riders and vehicles are ready to depart at $t=0$, but this can be easily extended.  %
%$\Theta$ denotes the set of types and 
%
A \emph{mechanism} is then defined as a function $\mc{M}:\Gamma^N \to \Pi \times \mathbb{R}^N$, that  computes a feasible allocation $\pi \in \Pi$ from a given type profile and payment vector $\vect{x}(\pi)\in \mathbb{R}^N$, where $x_i(\pi)$ is the payment of rider $i$ to the system. We sometimes use  $\pi(\vect{\gamma})$ to emphasize the dependence on  $\vect{\gamma}$. If $i$ chooses a taxi, $x_i(\pi)=0$.  Sometimes we use short-hand notation $\pi(\gamma_i)=\pi(\gamma_i,\vect{\gamma}_{-i})$.
%and $\pi_{-i}=\pi(\vect{\gamma}_{-i})$. 
Given this, the utility of rider $i$ is defined as $u_i(\pi)=-c_i(\pi)-x_i(\pi)$. 
%Since type $\vect{\gamma}$ is private information, the mechanism receives reports $\hat{\vect{\gamma}}\in \mathbb{R}_{+}^N$ from riders as inputs. 

\camera{In practice, the ridesharing mechanism may work as in the following example. First, rider $i$ reports their type $\gamma_i$ using a slider in a suitable smartphone app, along with $O_i^p$ and $D_i^p$. Note that $\gamma_i$ is not necessarily truthful. Once the report profile $\vect{\gamma}$ is obtained, the mechanism computes $\pi(\gamma)$ and $\vect{x}(\pi)$, and reveals them to riders before travel. 
%If rider $i$ accepts the offer, the service is executed and $x_i$ is charged.
Finally, $x_i$ is charged after the travel.}
\camera{To this end}, the mechanism \camera{needs} to achieve the following properties:

%\begin{itemize}
    %\item 
    Budget balance (BB), which is formulated as:
    \begin{equation}
\sum_{i \in \mc{N}}x_i(\pi)\ge c_F(\pi).
\end{equation}

    %\item 
    Individual rationality (IR), which is formulated as:
\begin{equation}
u_i(\pi)\ge-c_i^0, ~\forall i \in  \mc{N}.
\label{eq:ir}
\end{equation}

Dominant-strategy incentive compatibility (DSIC), or truthfulness, which is formulated as:
\begin{equation}
 \left.
 \begin{array}{l}
u_i(\pi(\gamma_i,\vect{\gamma}_{-i}))\ge u_i(\pi(\gamma_i',\vect{\gamma}_{-i})), \\
~\forall \gamma_i' \in \Gamma,  ~\forall i \in  \mc{N}, \forall \gamma_{-i} \in \Gamma^{N-1}.
 \end{array}
 \right.
\end{equation}
 
%\modified{Note that we look at a fleet of vehicles owned by a company, and focus on the total revenue of the company , rather than the individual budget valance for each vehicle.}
 
\section{The VCG Mechanisms}
\label{sec:vcg}

%We start by considering 
\modified{Next, we will show that the well-known VCG mechanism \cite{Shoham2008} does not achieve BB and polynomial-time computation at the same time when applied to the RS problem.} This mechanism first computes an optimal allocation $\pi(\vect{\gamma})$ given (reported) type profile $\vect{\gamma}$ and then it computes the payment $\vect{x}(\pi)$ as the externality that each rider imposes on other riders. Specifically, for our RS problem, the objective is to minimize $C_F(\mc{N},\pi)$, and so the VCG payment is computed as $x_i(\pi)=C_F(-i,\pi(\vect{\gamma}))-C_F(-i,\pi(\vect{\gamma}_{-i}))$. As a result, the optimal allocation $\pi$ and the payment $\vect{x}$ guarantee DSIC and IR. %The optimal allocation $\pi(\vect{\gamma})$ and $\pi(\vect{\gamma}_{-i})$ can be computed by solving the RS problem. 
%\modified{In the supplementary material, we show that the RS problem is NP hard, but with a mild assumption, an algorithm based on integer linear program (ILP) can solve it in polynomial time. However, VCG has the following issue.}
However, we show that the RS problem is NP hard by reducing the travelling salesman problem. Also, VCG has the following issue.
\begin{comment}
\camera{In the extended version, we show that the RS problem is NP hard by reducing the travelling salesman problem. However, with a mild assumption, an algorithm based on integer linear program (ILP) can solve it in polynomial time. The algorithm relaxes the ILP into a linear program (LP), that outputs an integer solution of the original ILP. The detail of the algorithm and proofs are also in the extended version. However, VCG has the following issue.}
\end{comment}

\begin{minortheorem}
\modified{VCG does not satisfy BB in the RS problem.}
\end{minortheorem}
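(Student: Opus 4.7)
The plan is to exhibit a minimal explicit counterexample in which the sum of VCG payments is strictly less than the fuel cost, which suffices since BB must hold for every instance. The instance I would use is the simplest possible: two riders with a common origin $A$ and a common destination $B$ connected by a direct edge of length $1$, a single vehicle of capacity $w=2$ initially at $A$, and any $\gamma_1,\gamma_2>0$ together with $\alpha\ge 0$ and $\beta>0$.

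First I would identify the socially optimal allocation $\pi(\vect{\gamma})$, which puts both riders in the vehicle (sharing strictly dominates two separate trips whenever $\beta>0$), giving fuel cost $c_F(\pi(\vect{\gamma}))=\beta$. Next I would compute the counterfactual allocations $\pi(\vect{\gamma}_{-i})$. With only one rider remaining, the vehicle option costs $\gamma_{-i}+\beta$ while a taxi costs $\alpha+\beta+\gamma_{-i}$, so the vehicle is weakly optimal whenever $\alpha\ge 0$, and the remaining rider's cost plus fuel is exactly $\gamma_{-i}+\beta$ in both $\pi(\vect{\gamma})$ and $\pi(\vect{\gamma}_{-i})$.

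Plugging into $x_i(\pi)=C_F(-i,\pi(\vect{\gamma}))-C_F(-i,\pi(\vect{\gamma}_{-i}))$, both terms equal $\gamma_{-i}+\beta$, giving $x_i=0$ for each rider. Hence $\sum_i x_i(\pi) = 0 < \beta = c_F(\pi)$, violating BB. The underlying intuition is that when the riders' routes align, neither imposes any externality on the other --- in fact, each saves fuel for the other --- so VCG charges them nothing, yet the system still must pay for fuel. This is the standard failure mode of VCG whenever the efficient allocation exploits positive synergies.

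The main obstacle is just verifying that this minimal example is internally consistent across the model variants. In particular, in the non-autonomous case I must confirm that the single remaining counterfactual rider is allowed to drive themselves (one rider in a moving vehicle satisfies the added constraint $\sum_j B[t,j,k]>0$), and that the taxi outside option never strictly beats the vehicle under $\alpha\ge 0$ so that the counterfactual cost is pinned down exactly. With these routine checks, the zero-externality conclusion is robust and BB fails whenever $\beta>0$.
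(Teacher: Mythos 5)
Your proposal is correct and uses exactly the same counterexample and argument as the paper: two riders sharing a common origin--destination edge, each imposing zero externality, hence zero VCG payments against a positive fuel cost. The extra verification of the counterfactual allocations and the taxi option is just a more explicit write-up of the same idea.
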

%\commenteg{I SEEM TO RECALL THAT REVIEWERS DIDN'T LIKE THIS FIGURE SINCE IT WAS NOT VERY INFORMATIVE. COULD REMOVE IT IF NEEDED}
%\modified{The proof is done with the following counterexample.}
\begin{proof}
Suppose both riders 1 and 2 have the same origin $O$ and destination $D$, which are directly connected with an edge. VCG allocates the vehicle to both riders for sharing. Since the absence of either rider does not change the allocation and cost of the other rider, the VCG payment is zero for both riders. This means the fuel cost of the vehicle is not recovered.
\end{proof}

\modified{Despite this negative result, we show it is possible to modify VCG to satisfy BB by sacrificing the optimality, as follows.} Let $\mc{K}_i(\pi)=\{k \in \mc{K} |B[t,i,k]=1, \exists t \in \mc{T}\}$ denote the set of vehicles that rider $i$ uses. Then, the total fuel costs of all rides that a rider $i$ is involved in is $c_F^i(\pi)=\sum_{k \in \mc{K}_i}\beta T_k^c(\pi)$. We then define an \emph{imaginary} system cost by redundantly counting the fuel costs of vehicles as follows: $c_I(\pi)=\sum_{i \in \mc{N}}c_F^i(\pi)$. The total social cost including this imaginary cost is then given by $C_I(\mc{N},\pi)=\sum_{i \in \mc{N}}c_i(\pi)+c_F(\pi)+c_I(\pi)$. 
Given this, we have a budget balanced VCG (BVCG) %First, the mechanism computes an allocation $\pi$ 
by replacing the objective function of VCG with $C_I$. 
\begin{minortheorem}
\modified{Assuming autonomous vehicles, BVCG satisfies DSIC, IR and BB.}
\end{minortheorem}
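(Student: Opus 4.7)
The plan is to verify DSIC, IR, and BB separately, by viewing BVCG as the ordinary VCG mechanism applied to the modified social cost $C_I$ and then exploiting the form of the imaginary term $c_I$ to close the BB gap that doomed the previous proposition.

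For DSIC, I would start from $-u_i = c_i(\pi) + x_i(\pi)$ and plug in $x_i = C_I(-i,\pi(\vect{\gamma})) - C_I(-i,\pi(\vect{\gamma}_{-i}))$, reading $C_I(-i,\pi) = C_I(\pi) - c_i(\pi)$ by analogy with the $C_F(-i,\pi)$ notation of Section~\ref{sec:vcg} (so the imaginary term is treated as shared system cost, not attributed to any particular rider). After cancellation this becomes $-u_i = C_I(\pi(\vect{\gamma})) - C_I(-i,\pi(\vect{\gamma}_{-i}))$ when evaluated at the true types, and the second term depends only on $\vect{\gamma}_{-i}$. Because the mechanism selects $\pi$ to minimise $C_I$ at the reported profile, reporting truthfully aligns the minimised objective with the one agent $i$ actually faces, so truth-telling is weakly optimal by the standard VCG argument.

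For IR, I would construct the auxiliary allocation $\pi'$ that extends $\pi(\vect{\gamma}_{-i})$ by placing rider $i$ on a taxi while leaving all other riders and all vehicle routes unchanged. This $\pi'$ is feasible for the full problem and, since rider $i$ contributes $c_i^0$ to $\sum_j c_j$ and nothing to either $c_F$ or $c_I$, a direct calculation gives $C_I(\pi') = c_i^0 + C_I(-i,\pi(\vect{\gamma}_{-i}))$. Optimality of $\pi(\vect{\gamma})$ then gives $C_I(\pi(\vect{\gamma})) \leq c_i^0 + C_I(-i,\pi(\vect{\gamma}_{-i}))$, which combined with the identity $u_i = C_I(-i,\pi(\vect{\gamma}_{-i})) - C_I(\pi(\vect{\gamma}))$ from the DSIC step yields $u_i \geq -c_i^0$.

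For BB, I would first establish the per-rider bound $x_i \geq c_F^i(\pi(\vect{\gamma}))$ and then sum over $i$. The bound comes from comparing $\pi(\vect{\gamma}_{-i})$ to the restriction of $\pi(\vect{\gamma})$ to agents $-i$, which is feasible for the $-i$ subproblem because vehicles are autonomous; optimality of $\pi(\vect{\gamma}_{-i})$ for that subproblem gives $C_I(\pi(\vect{\gamma}_{-i})) \leq C_I(\pi(\vect{\gamma})) - c_i(\pi(\vect{\gamma})) - c_F^i(\pi(\vect{\gamma}))$, and substituting into the payment formula yields the bound. Summing produces $\sum_i x_i \geq c_I(\pi(\vect{\gamma}))$. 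The last step, which I expect to be the main subtlety, is to show $c_I(\pi(\vect{\gamma})) \geq c_F(\pi(\vect{\gamma}))$: any vehicle that moves at the optimum must be used by at least one rider, since otherwise replacing its route by an all-loops stationary route would strictly decrease $c_F$ while leaving every other term of $C_I$ unchanged, contradicting optimality. This is exactly where the autonomous-vehicle assumption is used, guaranteeing that empty movements are replaceable without breaking feasibility; from $c_I \geq c_F$ budget balance follows immediately.
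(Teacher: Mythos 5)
Your proposal is correct and follows essentially the same route as the paper's proof: DSIC and IR via the standard VCG argument applied to the objective $C_I$ (the paper phrases this as adding a ``system'' player with cost $c_F+c_I$), and BB via the per-rider bound $x_i \geq c_F^i(\pi)$ obtained by restricting $\pi(\vect{\gamma})$ to $-i$ (feasible by autonomy, the paper's Lemma on rider absence), then summing and using $c_F \leq c_I$. Your only departure is a welcome one: you actually justify $c_I(\pi) \geq c_F(\pi)$ by arguing that an unused moving vehicle contradicts optimality, a step the paper's corresponding lemma asserts without proof.
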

\modified{Briefly, this is because the payment is computed based on the imaginary cost $c_I$, which is redundant and larger than $c_F$. Autonomous vehicles guarantee the feasibility of vehicle routing even when $i$ is absent in the payment computation. However, scalability is an issue as the standard VCG.}
%Since the new objective function with $C_I$ is a non-linear function, we cannot use the ILP based polynomial-time algorithm.}

\section{Greedy Mechanisms}
\label{sec:greedy}

\modified{To address the issues with VCG mechanisms,} we consider greedy mechanisms in this section.  
%\modified{However, as we will see, mechanisms based on naive greedy allocations fail to satisfy DSIC.}
%we cannot be formulated as a linear function as $C_F$ and 
%then we cannot use the same efficient algorithm in . Hence, for large settings, we require another mechanism that can compute a budget balanced allocation efficiently.
To this end, we first show that a naive greedy approach does not satisfy IC and we 
%brief the existing theory of mechanism with one parameter agents. Then we 
propose a general theory to design greedy DSIC mechanisms based on the concept of monotonicity. Then, we apply our theory to design a greedy DSIC ridesharing mechanism.

\subsection{Monotonicty and Violation with Naive Greedy}
\label{sec:naivegreedy}

\cite{Myerson1981,archer2001truthful} showed that, for single-valued domains, monotonicity of the allocation function $\pi(\vect{\gamma})$ is a necessary and sufficient condition for the existence of  payments $\vect{x}(\pi)$ satisfying DSIC. Using notation from our model, monotonicity is defined as follows:
\begin{equation}
    \gamma_i \leq \gamma_i' \Rightarrow V_i[\pi(\gamma_i)] \leq V_i[\pi(\gamma_i')], ~\forall \gamma_i, \gamma_i' \in \Gamma,
    \label{eq:mono}
\end{equation}
where, in general, $V_i:\Pi\to \mathbb{R}$ is the amount allocated, and referred to as \emph{work} in \cite{archer2001truthful} and, for our setting, $V_i=-T_i(\pi)$ corresponds to the negative travel time. 

Although monotonicity plays a key role in designing DSIC mechanism, finding appropriate monotone allocation functions $\pi$ is non-trivial. \modified{Here, we show that a naive approach fails to implement a DSIC mechanism. We define a naive greedy allocation as follows. First, all riders are assigned to a taxi by default. Then, for each iteration it chooses rider $i$ who 
%reduces the social cost by the most, i.e., the one that minimizes
minimizes the marginal social cost, given by 
$c_i-c_i^0+c_F$, without changing the allocation in riders allocated so far, and computes the best allocation for the current rider.}

\begin{figure}[t]
\centering
\includegraphics[width=1.\columnwidth]{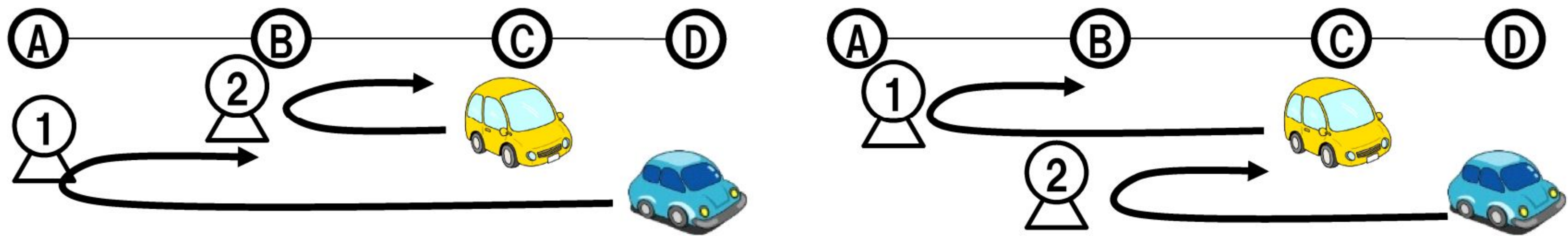}
\caption{\modified{Example showing how monotonicity is achieved with our mechanism and violated with a naive greedy mechanism: $D_1^p=B,D_2^p=C,O_1^c=C,O_2^c=D$. Assume $l_e=1, \forall e\in \mc{E}$ and expensive taxi $\alpha=10$ and $\beta=1$.}}
\label{fig:tradgreedy}
\end{figure}

\begin{minortheorem}
\modified{Mechanisms based on the naive greedy allocation do not satisfy monotonicity or DSIC.}
\label{thm:naivegreedy}
\end{minortheorem}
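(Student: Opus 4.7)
The plan is to produce an explicit counterexample to monotonicity, using the instance already displayed in Figure~\ref{fig:tradgreedy}, and then invoke the Myerson--Archer characterisation to conclude failure of DSIC. Because monotonicity is a necessary condition for the existence of DSIC payments in single-parameter domains, it suffices to exhibit a single pair of reports $\gamma_i < \gamma_i'$ for some rider such that the naive greedy allocation gives that rider \emph{strictly less} work $V_i = -T_i(\pi)$ at the higher report, i.e.\ a strictly longer travel time.

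Concretely, I would take the two-rider, two-vehicle setting of Figure~\ref{fig:tradgreedy} (with $\alpha=10$, $\beta=1$, unit edge lengths, the stated origins/destinations for vehicles and the destinations $D_1^p=B$, $D_2^p=C$, and choose the rider origins so that sharing vehicle~1 is the socially attractive option for rider~1 only in a narrow range of $\gamma_1$). I would then fix $\gamma_2$ and compare two values of $\gamma_1$, say $\gamma_1 < \gamma_1'$, and run the naive greedy procedure by hand in each case. The computation consists of two bookkeeping rounds: in the first round, identify the rider whose marginal social cost $c_i - c_i^0 + c_F$ (when served alone, on top of the current default-taxi allocation) is smallest, commit to that rider's best allocation, and in the second round commit the remaining rider under the residual feasibility. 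The aim is to arrange the numbers so that at $\gamma_1$, rider~1 is picked first and assigned vehicle~1 directly (short travel time $T_1$), whereas at $\gamma_1'$, rider~2's marginal cost becomes the smaller one, so rider~2 is picked first and seizes vehicle~1 along its shortest path; rider~1 is then forced in round two either onto a taxi or onto a longer shared route, giving a strictly larger $T_1$. This yields $V_1[\pi(\gamma_1)] > V_1[\pi(\gamma_1')]$ despite $\gamma_1 < \gamma_1'$, directly violating (\ref{eq:mono}).

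Having produced the violation, the DSIC conclusion is automatic: by the cited results of Myerson and Archer--Tardos, in a single-parameter domain monotonicity of the allocation rule is necessary for the existence of a payment rule implementing the rule in dominant strategies; hence no choice of $\vect{x}(\pi)$ can make the naive greedy mechanism DSIC. I will state this implication in one line and reference the relevant result.

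The main obstacle I expect is engineering the example so that the marginal-cost tie-breaking flips in the intended direction as $\gamma_1$ grows, without accidentally also changing the downstream allocation of rider~2 in a compensating way. In particular, because increasing $\gamma_1$ raises both rider~1's marginal cost (through $c_1 = \gamma_1 T_1$) and the taxi outside option $c_1^0 = (\alpha+\beta+\gamma_1)T_1^0$, I need to pick the graph distances and $\alpha,\beta$ so that the two effects do not cancel and the ordering of riders in the greedy queue genuinely changes at the chosen threshold. Once the numerical parameters are pinned down---an elementary but careful case check---the rest of the argument is immediate.
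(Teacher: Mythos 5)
Your proposal is correct and follows essentially the same route as the paper: the paper's own proof uses exactly the Figure~\ref{fig:tradgreedy} instance (fixing $\gamma_2=4$ and comparing $\gamma_1=1$ versus $\gamma_1=3$) to show that a lower report moves rider~1 earlier in the greedy order and yields a strictly larger $V_1=-T_1$, violating (\ref{eq:mono}), and then invokes necessity of monotonicity for DSIC. The only difference is that the paper has already pinned down the concrete numbers whose selection you flag as the remaining work, and the $\gamma_1$-dependence you worry about indeed reduces to the detour term $\gamma_1(T_1-T_1^0)$, which is what makes the ordering flip in the required direction.
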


\begin{proof}
\modified{Figure \ref{fig:tradgreedy} also shows a counterexample whereby a naive greedy allocation fails to satisfy monotonicity. Let $\gamma_2=4$. If $\gamma_1=3$, the naive greedy allocates vehicle 1 to rider $2$ in the first iteration because the rider brings the lowest cost $\gamma_2 T_2-(\alpha+\beta+\gamma_2)T_2^0+c_F=-5$ (left figure). However, if $\gamma_1=1$, the cost of the rider decreases as $\gamma_1 T_1-(\alpha+\beta+\gamma_1)T_1^0+c_F=-6$ and rider $1$ is allocated first (right figure). Then, the lower $\gamma_1$ results in the earlier allocation and larger $V_1=-T_1$, which is the violation of the monotonicity (\ref{eq:mono}). This violation of monotonicity means mechanisms with the naive greedy allocation also violate DSIC, since this is a necessary condition for DSIC.}
\end{proof}

\subsection{Design of Monotone Greedy Allocation}
\label{sec:theory}

To \modified{address the issue above}, we propose a characterisation that is useful for designing monotone allocation functions for greedy mechanisms. \modified{As we saw, the naive greedy violates monotonicity by giving priority to the riders with lower $\gamma_i$.}
%In more detail, a greedy mechanism is a mechanism which allocates agents one at a time. 
\modified{Instead, we allocate agents}
%Importantly, the agents are allocated 
in descending order of their (reported) type $\gamma_i$.
%order of the agents is determined by their (reported) type $\gamma_i$ \modaaai{in descending order.}, and an agent with a higher $\gamma_i$ is allocated earlier. 
In what follows, we use the index to denote the order in which agents are allocated.  I.e., $j<i$ implies $\gamma_j \geq \gamma_i$ and that $j$ is allocated before $i$. Let $\pi_{\leq i}$ denote the allocation up to and including $i$ (with remaining agents allocated arbitrarily or, e.g., in the context of ride sharing, allocated to a taxi). 
%$<i=\{j\in\mc{N}|\gamma_j>\gamma_i\}$ denote the set of agents prior to $i$, and $j<i$ denotes that $j$ is an element of $<i$.
%Also, let $\Pi_{\mc{M}}\subseteq \Pi$ be the feasible set where $\mc{N}\setminus\mc{M}$ are unallocated (or allocated to an outside option). Then, a partial allocation for prior agents is denoted by $\pi_{<i} \in \Pi_{<i}$. 
Also, let $\Pi_{\leq i}(\pi_{\leq i-1})$ or simply $\Pi_{\leq i}$ denote the set of possible allocations given $\pi_{\leq i-1}$, keeping agents $j<i$ fixed. 
%\begin{equation}
%\left.
%\begin{array}{ll}
%\Pi_{\leq i}(\pi)&=\{\pi_{\leq i} \in %\Pi_{\leq i} |  \\
%& \pi_{\leq i} {\rm ~is ~same ~as~ } \pi, %\forall j<i \}.
%\end{array}
%\right.
%\label{eq:greedyfeas}
%\end{equation}
%Then, the set of allocations for $i$ based on a prior allocation $\pi_{<i}$ is defined as $\Pi^i=\Pi_{\leq i}(\pi_{<i})$. 
%With slight abuse of notation, we sometimes use $\Pi_{\leq i}(\gamma_i)$ to emphasize the dependence on $\gamma_i$. We require shrinking allocation space, $\Pi^i\subseteq \Pi^j$ if $j<i$.
%We will see a specific definition for the ridesharing problem in the next section.
%Also, let $\Pi_i$ be the set of feasible allocations of agent $i$, given all the allocations made so far, and $\pi_i\in \Pi_i$ is an allocation. We will use $\pi_i(\gamma_i)$ and $\Pi_i(\gamma_i)$ to emphasize the dependence on $\gamma_i$. $\Pi_{j\leq i}$ is the set of allocations of the riders up to $i$ and $\pi_{\leq i}\in \Pi_{j\leq i}$ is an allocation.
Given this, we define a class of \emph{greedy allocations with decreasing choices (GADC)} as follows:
%Let us denote a subset of agents $\mc{N}'\subseteq \mc{N}$ and let $\Pi_i$ and $\Pi_{\mc{N}'}$ be the set of feasible allocations of agent $i$ and set of feasible allocations of agents $\mc{N}'\subseteq \mc{N}$, respectively. For simplicity, we denote a set of agents $(j\leq i)=\{j\in\mc{N}|\gamma_j\leq\gamma_i\}$, and an allocation of those agents $\pi_{\leq i} \in \Pi_{j\leq i}$. Given this, we define a class of greedy allocations with decreasing choice (GADC) as follows:
%
\begin{equation}
\left.
\begin{array}{l}
    %\pi_{\leq i}=\underset{\substack{\pi \in\Pi^i}}{\argmin}J(\pi)
    \pi_{\leq i}=\argmin_{\pi \in \Pi_{\leq i}} J(\pi)
\end{array}
\right.
\label{eq:gars}
\end{equation}
subject to $\Pi_{\leq i} \subseteq \Pi_{\leq {j}}$ if $j<i$. Here,  $J:\Pi\to \mathbb{R}$ is an objective function. In words, GADC captures allocation functions which compute the optimal allocation for each $i$ individually given a set of possible allocations, \emph{and this set reduces as $i$ reports a smaller $\gamma_i$} (and so appears later in the sequence).
%The constraint means that the set of options reduce when an agent reports a smaller value and so is allocated later. Then:% the following theorem holds.

\begin{theorem}
    Let $\pi_{\leq i}(\gamma_a),\pi_{\leq i}(\gamma_b)$ denote $i$'s allocation when reporting $\gamma_a$,$\gamma_b$ respectively. A GADC satisfies monotonicity if and only if, $\forall \gamma_a, \gamma_b, i \in \mc{N}$:
    \begin{equation}
%\left.
%\begin{array}{l}
    J(\pi_{\leq i}(\gamma_a))\leq J(\pi_{\leq i}(\gamma_b)) 
    \Rightarrow V_i[\pi_{\leq i}(\gamma_a)]\geq V_i[\pi_{\leq i}(\gamma_b)]. 
%\end{array}
%\right.
    \label{eq:consist}
    \end{equation}
    \label{thm:nsmono}
\vspace{-0.2cm}
\end{theorem}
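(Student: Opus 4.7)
The plan is to split the biconditional into its two directions and to lean on a structural monotonicity-of-$J$ lemma about GADCs that I would prove first.

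The structural lemma I would establish is: for any $\gamma_a \geq \gamma_b$, $J(\pi_{\leq i}(\gamma_a)) \leq J(\pi_{\leq i}(\gamma_b))$. The argument is that the agents preceding $i$ in the ordering are precisely those whose reported types exceed $\gamma_i$, so raising $i$'s report from $\gamma_b$ to $\gamma_a$ only removes agents (those with $\gamma_j\in(\gamma_b,\gamma_a]$) from $i$'s prefix. Hence $i$'s position satisfies $p_a \leq p_b$, and the two runs produce the same committed prefix up to position $p_a-1$ because they execute the same sequence of $\argmin$ steps on the same higher-type agents. So $i$ faces the common feasible set $\Pi_{\leq p_a}$ in the $\gamma_a$-run and is assigned $\pi_{\leq i}(\gamma_a)=\argmin_{\pi\in\Pi_{\leq p_a}}J(\pi)$. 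Meanwhile the $\gamma_b$-run only further shrinks this set through positions $p_a,\ldots,p_b$, and the nested-set condition $\Pi_{\leq i}\subseteq\Pi_{\leq j}$ for $j<i$ chains to give $\pi_{\leq i}(\gamma_b)\in \Pi_{\leq p_a}$. The min property then yields the claimed $J$-inequality.

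For sufficiency I would assume condition (\ref{eq:consist}) and fix $\gamma_a\geq\gamma_b$. The structural lemma supplies $J(\pi_{\leq i}(\gamma_a))\leq J(\pi_{\leq i}(\gamma_b))$, and condition (\ref{eq:consist}) converts this into $V_i[\pi_{\leq i}(\gamma_a)]\geq V_i[\pi_{\leq i}(\gamma_b)]$, which is exactly the monotonicity requirement (\ref{eq:mono}). For necessity I would assume monotonicity and fix any $\gamma_a,\gamma_b$ with $J(\pi_{\leq i}(\gamma_a))\leq J(\pi_{\leq i}(\gamma_b))$. If $\gamma_a\geq\gamma_b$, monotonicity directly delivers $V_i[\pi_{\leq i}(\gamma_a)]\geq V_i[\pi_{\leq i}(\gamma_b)]$. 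Otherwise $\gamma_a<\gamma_b$, and the structural lemma with swapped roles gives the reverse inequality $J(\pi_{\leq i}(\gamma_a))\geq J(\pi_{\leq i}(\gamma_b))$, forcing equality; both allocations are then minimizers of the same $J$ over the common feasible set reached by the $\gamma_b$-run at position $p_b$, so a consistent (deterministic) tie-breaking rule built into $\argmin$ identifies them and yields $V_i[\pi_{\leq i}(\gamma_a)]=V_i[\pi_{\leq i}(\gamma_b)]$, closing the argument.

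The main obstacle is the structural lemma itself, specifically justifying that positions $1,\ldots,p_a-1$ produce an identical committed prefix in both runs. This relies on the $\argmin$ at each step depending only on the types of agents allocated up to and including that step, which is implicit in the GADC abstraction and verifiable for the concrete $J$ chosen in the ridesharing mechanism. A secondary subtlety is the tie-breaking step needed to close the $\gamma_a<\gamma_b$ branch of necessity; stating the theorem under a canonical $\argmin$ convention makes this routine, and without such a convention one would have to restrict the claim to strict inequalities.
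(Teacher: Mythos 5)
Your proposal follows essentially the same route as the paper's own proof: the central structural fact that nested feasible sets ($\Pi_{\leq i}(\gamma_a)\supseteq\Pi_{\leq i}(\gamma_b)$ when $\gamma_a\geq\gamma_b$) force $J(\pi_{\leq i}(\gamma_a))\leq J(\pi_{\leq i}(\gamma_b))$, which combined with condition (\ref{eq:consist}) gives sufficiency, and which combined with monotonicity gives necessity (you argue directly, the paper by contradiction, but the content is the same). The one substantive difference is that you explicitly confront the tie case $J(\pi_{\leq i}(\gamma_a))=J(\pi_{\leq i}(\gamma_b))$ with $\gamma_a<\gamma_b$ via a canonical tie-breaking convention, whereas the paper's proof silently infers $\gamma_b\leq\gamma_a$ from the $J$-inequality and skips this case entirely --- your treatment is the more careful of the two.
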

Intuitively, this means that, the more an agent contributes to decreasing the global objective (in our case, minimising the social cost), the better their allocation should be (in our case, the lower the travel time).  
%\modified{Briefly, this is because the condition (\ref{eq:consist}) means the minimization of $J$ by GADC coincides with the maximization of $V$, and riders with larger $\gamma$ has higer priority to be allocated more.}

\subsection{Monotonic Greedy Ridesharing Mechanism}
\label{sec:greedyrs}

We now apply Theorem~\ref{thm:nsmono} to our setting. First, we customize GADC from section \ref{sec:theory} for our ridesharing problem. In particular, $\Pi_{\leq i}(\pi)$ 
%in (\ref{eq:greedyfeas}) 
is redefined as follows:

\begin{align}
\Pi_{\leq i}(\langle \vect{r}^p,\vect{r}^c,B \rangle) & =\{ \langle \vect{r'}^p,\vect{r'}^c,B' \rangle \in \Pi | \nonumber\\
&\forall t \in \mc{T}, j<i,j'>i, k \in \mc{K}: \nonumber\\
& B'[t,j,k]=B[t,j,k], {r'}^p_j=r^p_j, 
\label{eq:shrink}\\
& B[t,j',k]=0 \}.
\label{eq:greedyfeasrs}
\end{align}

This means that, when computing $i$'s allocation, the allocation of the prior riders $j<i$ are not changed. With this definition, the assumption $\Pi_{\leq i}\subseteq \Pi_{\leq j}$ holds, because \modified{constraint (\ref{eq:shrink})} 
%the first constraint for $\forall j<i$ 
shrinks the set.
%, while the second constraint does not, due to the unlimited supply of taxi.
Then, a greedy allocation for ridesharing is obtained by applying equation (\ref{eq:gars}), with the objective function $J(\pi)=\sum_{i\in \mc{N}}c_{i}(\pi)$. \modified{In addition, to obtain monotonicity,  (\ref{eq:consist}) must be satisfied. To this end, we will show that we require an additional constraint whereby riders cannot be allocated to a taxi (i.e., their outside option). Although this constraint seems restrictive, we will relax this again later so that riders can use taxis if this is in their best interest (to satisfy IR). In the following, we refer to this setting as greedy allocation for ridesharing with no-taxi constraint or \mbox{GARS-N}. Formally:}
\begin{equation}
\left.
\begin{array}{l}
\pi_{\leq i}=\underset{\substack{\pi \in\Pi^i}}{\argmin}\underset{\substack{i'\in \mc{N}}}{\sum}c_{i'}(\pi) \\
s.t.~l_{{\rm edge}(r_i^p,t)}>0\Rightarrow \underset{\substack{k \in\mc{K}}}{\sum} 
B[t,i,k]=1, \forall t\in \mc{T}. \\
\end{array}
\right.
\label{eq:gars_notaxi}
\end{equation}

\begin{lemma}
GARS-N satisfies monotonicity.
\label{thm:gars_nmono}
\end{lemma}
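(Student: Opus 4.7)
The lemma can be proved directly from the definition (\ref{eq:mono}) of monotonicity, without going through Theorem~\ref{thm:nsmono}. Fix all other reports $\vect{\gamma}_{-i}$ and choose any $\gamma_a>\gamma_b$ for rider $i$; the goal is to show $T_i(\pi(\gamma_a))\leq T_i(\pi(\gamma_b))$, which, since $V_i=-T_i$, is exactly $V_i(\pi(\gamma_a))\geq V_i(\pi(\gamma_b))$.

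My first step is to note that each iteration of GARS-N effectively minimises only $T_i$. At step $i$, the constraints in $\Pi_{\leq i}$ freeze all $j<i$, so $\sum_{j<i}c_j$ is a constant; the condition that the future riders $j'>i$ have $B=0$ puts them into their taxi default, contributing the fixed $\sum_{j'>i}c_{j'}^0$; and the no-taxi constraint in (\ref{eq:gars_notaxi}) makes $c_i(\pi)=\gamma_i T_i(\pi)$. Thus $\sum_{i'}c_{i'}(\pi)=\gamma_i T_i(\pi)+\text{const.}$, and for $\gamma_i>0$ the greedy step coincides with a $T_i$-minimiser over the feasible $i$-choices (the case $\gamma_i=0$ is handled by a consistent tie-breaking rule that also selects a $T_i$-minimiser).

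With this reduction in hand, I would compare the two runs by partitioning the other riders into $S=\{i'\neq i : \gamma_{i'}>\gamma_a\}$ and $M=\{i'\neq i : \gamma_b<\gamma_{i'}\leq \gamma_a\}$, breaking ties by a fixed secondary index. Agents in $S$ precede $i$ under both reports, face identical preceding contexts, and by induction on their position receive the same allocation $\pi_S$ in both scenarios, since (by the reduction above) their local objectives depend only on the prefix and on their own $\gamma_{i'}$. Under $\gamma_a$, rider $i$ is processed immediately after $S$, so $\pi_{\leq i}(\gamma_a)$ is a $T_i$-minimiser over the $i$-choices compatible with $\pi_S$. Under $\gamma_b$, the $M$-riders are processed first, producing some $\pi_{S\cup M}$ extending $\pi_S$, and $\pi_{\leq i}(\gamma_b)$ is a $T_i$-minimiser over the $i$-choices compatible with $\pi_{S\cup M}$. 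The additional ridesharing commitments of $M$ in scenario $B$ can only introduce extra capacity and vehicle-routing conflicts for $i$ relative to scenario $A$ (where $M$ rides taxis), so every $i$-choice feasible under $\pi_{S\cup M}$ is also feasible under $\pi_S$; minimising $T_i$ over the larger set can only decrease the optimum, giving $T_i(\pi_{\leq i}(\gamma_a))\leq T_i(\pi_{\leq i}(\gamma_b))$, i.e., (\ref{eq:mono}).

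The main obstacle will be making the inclusion of feasible $i$-choices precise, because the set $\Pi_{\leq i}$ changes in two opposite ways across the two scenarios: scenario $B$ fixes more prior riders (tightening) but also releases the $B=0$ condition on $M$ (loosening). The resolution is to focus on the only degree of freedom that matters, namely $i$'s own vehicle assignment and route, and to observe that $M$'s ridesharing in scenario $B$ both constrains the available vehicle routes and consumes capacity that was free in scenario $A$, so the net effect on $i$'s choice set is a shrinkage. Once this set-inclusion is verified, the lemma reduces to the trivial monotonicity of the minimum under set inclusion.
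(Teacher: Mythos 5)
Your proof is correct and rests on the same two ingredients as the paper's --- (i) with the prefix frozen and the no-taxi constraint in force, each greedy step (\ref{eq:gars_notaxi}) reduces to minimising $T_i$, and (ii) reporting a lower type only shrinks the set of options available to $i$ --- but you assemble them by a genuinely different route. The paper proves the lemma by verifying condition (\ref{eq:consist}) and invoking Theorem~\ref{thm:nsmono}, leaning on the abstract GADC inclusion $\Pi_{\leq i}(\gamma_b)\subseteq\Pi_{\leq i}(\gamma_a)$; you instead argue directly from definition (\ref{eq:mono}) via the $S$/$M$ decomposition. This is more than a cosmetic difference: as you correctly observe, the literal inclusion of the sets $\Pi_{\leq i}$ across the two runs is in tension with itself, because moving $i$ later both freezes more riders (constraint (\ref{eq:shrink})) and \emph{releases} the $B=0$ requirement (\ref{eq:greedyfeasrs}) on the riders in $M$, who under GARS-N are necessarily placed in vehicles. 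Your resolution --- compare only the projections onto $i$'s own route and assignment, and map any run-$B$ completion to a run-$A$ completion by evicting $M$ to taxis while keeping the vehicle routes --- is exactly what is needed, and it makes explicit a step the paper's proof glosses over. Two small points to tighten: the eviction argument needs vehicles that may legally move empty (the paper's autonomous-vehicle setting), since a vehicle carrying only $M$-riders at some time step would otherwise become infeasible after eviction; and the final line should note that $T_i(\pi)=T_i(\pi_{\leq i})$ because constraint (\ref{eq:shrink}) prevents riders after $i$ from altering $i$'s allocation, so the inequality established for $\pi_{\leq i}$ transfers to the final allocation $\pi$ as required by (\ref{eq:mono}).
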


\modified{Briefly, this is because minimizing $J$ coincides with minimizing $c_i$ for each rider $i$, since the constraints (\ref{eq:shrink}) and (\ref{eq:greedyfeasrs}) do not allow $i$ to change the allocations and costs of other riders. If riders do not use a taxi, $V_i=-T_i$ is proportional to $c_i$ and the equivalent condition of monotonicity in Theorem \ref{thm:gars_nmono} is satisfied. With the case in Figure \ref{fig:tradgreedy}, GARS-N allocates rider 1 first (right figure) when $\gamma_1=5$ and $\gamma_2=4$, and then $T_1=3$. Meanwhile, if $\gamma_1=1$, rider 2 is allocated first (left figure) and then $T_1=4$. Contrary to the naive greedy, the higher $\gamma_1$ results in the earlier allocation and larger $V_1$, and this satisfies monotonicity. Note that, without the no-taxi constraint, monotonicity can be violated as follows. If a taxi is cheap ($\alpha=1$), then rider 1 uses a taxi when $\gamma_1=1$, resulting in $T_1=1$, which violates monotonicity. Also, note that the no-taxi constraint does not help  naive greedy because nobody uses a taxi in the proof of Proposition \ref{thm:naivegreedy}.}

Next, we consider the computation of the payment.  Let $\hat{T}_i(\pi)=T_i(\pi)-T_i^0$ denote the normalized travel time. Following \cite{Myerson1981}, the unique payment satisfying DSIC is given by:
\begin{equation}
    x_i = x_i^0+\int_0^{\gamma_i} \gamma \frac{d}{d\gamma}\hat{T}_i(\pi(\gamma)) d\gamma
	\label{eq:integ}
\end{equation}
 where $x_i^0$ is a value which is independent of $i$'s report.  Solving the integral is challenging in general since it is necessary to recompute the allocations given all possible types of the rider whose payment we are computing. However, in our case, because $\hat{T}_i(\pi)$ only depends on the order of the allocation, it is sufficient to vary the position in the ordering. We refer to the GARS-N allocation with this payment as the GARS-N mechanism. Also, this mechanism can be computed in polynomial-time because the problem (\ref{eq:gars_notaxi}) includes the computation of a single rider only with the routes of former riders fixed, and is not NP-hard.
 %using the same technique \modified{in section \ref{sec:vcg}}.  

\begin{lemma}
The GARS-N mechanism is DSIC and can be computed in polynomial-time.
\label{thm:garsn_dsic}
\end{lemma}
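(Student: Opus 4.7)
The plan is to split the lemma into its two claims and use a known tool for each. For DSIC, I would appeal directly to the Myerson--Archer characterization already invoked in Section~\ref{sec:naivegreedy}: in single-valued domains, monotonicity of the allocation together with the Myerson integral payment (\ref{eq:integ}) is sufficient for DSIC. Lemma~\ref{thm:gars_nmono} delivers monotonicity of the GARS-N allocation with respect to $V_i=-T_i$, and the payment of the GARS-N mechanism is defined to be exactly the Myerson integral, so DSIC follows immediately. The free constant $x_i^0$ plays no role here; it is left to be pinned down later by IR/BB considerations.

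For polynomial-time computation I would treat allocation and payment separately. \emph{Allocation.} Riders are processed in descending order of $\gamma_i$, giving $N$ outer iterations. In iteration $i$, constraints (\ref{eq:shrink})--(\ref{eq:greedyfeasrs}) freeze the route and assignment of every earlier rider $j<i$ and forbid later riders $j'>i$ from any vehicle, so the only free quantities are $r_i^p$ and the still-uncommitted portions of the vehicle routes. Under the no-taxi constraint the objective restricted to these variables equals $\gamma_i T_i(\pi)$ up to a constant that does not depend on $i$'s choice, so the subproblem collapses to finding the earliest arrival of rider $i$ at $D_i^p$. I would solve this as a shortest-path computation on the time-expanded graph $\mc{V}\times \mc{T}$, whose arcs encode feasible actions (boarding, riding, waiting) subject to prior commitments and the capacity bound $w$. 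Because this graph has size polynomial in $|\mc{V}|,T,K$, and only $N$ such subproblems are solved, the whole allocation runs in polynomial time.

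\emph{Payment.} Although (\ref{eq:integ}) is formally an integral over $[0,\gamma_i]$, the GARS-N allocation seen by rider $i$ depends on $\gamma$ only through $i$'s position in the descending ordering of $\vect{\gamma}$, so $\hat{T}_i(\pi(\gamma))$ is a piecewise-constant step function with at most $N-1$ breakpoints, located at the reported types $\gamma_j$, $j\ne i$. Its derivative is therefore a finite sum of signed Dirac deltas, and the integral telescopes to a closed-form sum of at most $N$ terms of shape $\gamma_j(\hat{T}_i^{(j)}-\hat{T}_i^{(j-1)})$, each term obtained from one additional GARS-N run. Since each run is polynomial, so is the payment.

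The main obstacle, and the point I would be most careful about, is giving a clean polynomial-time algorithm for the single-rider subproblem inside each outer iteration: the vehicle routes are only partially fixed by the earlier riders, so one must argue that the remaining freedom can be resolved by a single shortest-path computation on a polynomial-size time-expanded graph rather than by a combinatorial search over partial vehicle itineraries. Once this reduction is in place, the rest of the argument is standard.
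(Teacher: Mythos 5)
Your DSIC argument is sound and matches the paper's in substance: the paper also combines the monotonicity of Lemma~\ref{thm:gars_nmono} with the Myerson payment (\ref{eq:integ}), except that it re-derives the necessity and sufficiency of that payment rule from the IC inequalities (obtaining the jump-sum form of the integral) rather than citing the characterization. Your observation that $\hat{T}_i(\pi(\gamma))$ is a step function whose breakpoints are the other riders' reports, so that the integral collapses to at most $N$ terms each requiring one extra allocation run, is exactly how the paper's payment algorithm works.

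The gap is where you yourself flagged it: the polynomial-time solvability of the single-rider subproblem (\ref{eq:gars_notaxi}). Your reduction to a shortest path on the time-expanded graph $\mc{V}\times\mc{T}$ does not go through as stated, because the arc set of that graph is not fixed: constraints (\ref{eq:shrink})--(\ref{eq:greedyfeasrs}) pin down each vehicle's route only on the time intervals where it carries an earlier rider, and elsewhere the vehicle routes are decision variables. Whether rider $i$ can be carried along edge $(v,v')$ at time $t$ therefore depends on whether some vehicle can be routed to $v$ by time $t$ \emph{and} still honour its remaining commitments, and this feasibility must hold jointly along the entire itinerary of each vehicle, not edge by edge. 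A state space of (rider location, time) cannot certify this; enlarging it to track all vehicle positions is exponential in $K$. The paper avoids the issue by a different device: it formulates the subproblem as an integer linear program over edge-indicator variables $x^p_{i,t,e}$ and $x^c_{k,t,e}$ for the rider and all vehicles simultaneously, and shows that after splitting each capacity-$w$ vehicle into $w$ unit-capacity copies the constraint matrix has entries in $\{-1,0,1\}$ and the LP relaxation is integral, so the subproblem is a polynomial-size LP. If you want to keep a combinatorial argument, you would need either to restrict to the single-vehicle-per-rider variant (sGARS-NIR), where a per-vehicle reachability precomputation suffices, or to supply the missing consistency argument; otherwise the LP route is the one to adopt.
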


To ensure BB, we compute the base payment $x_i^0$ based on report independent parameters $\vect{T}_0$, $\hat{\vect{T}}_{min}$ and $c_{Fub}$, where $\vect{T}_0$ is a vector of $T_i^0$. Furthermore,  $\hat{\vect{T}}_{min}$ is a vector of normalized minimal travel times, which is computed by solving the following allocation problem:
\begin{equation}
\left.
\begin{array}{l}
\pi=\underset{\substack{\pi' \in\Pi}}{\argmin}\underset{\substack{i\in\mc{N}}}{\sum}\hat{T}_i(\pi') \\
s.t.~l_{{\rm edge}(r_i^p,t)}>0\Rightarrow \underset{\substack{k \in\mc{K}}}{\sum} 
B[t,i,k]=1, \forall t\in \mc{T}, \forall i \in \mc{N}.\\
\end{array}
\right.
\nonumber
%\label{eq:minT}
\end{equation}
Finally, $c_{Fub}$ is an upper bound on the fuel cost $c_F$ of \mbox{GARS-N}. We assume that this parameter is given (in the experimental section, we explain how this is computed in our experimental setting).
%computed by solving GARS-N with all allocation orders that the orders of two riders are changed from the report. Due to the limited space, complete algorithm is explained in the full paper.
The base payment is computed by splitting $c_{Fub}$ among riders,
proportional to \modaaai{each agent $i$'s relative shortest travel time, $T^0_i - \hat{T}_{min,i} - \min_j (T^0_j - \hat{T}_{min,j})$.} %$\vect{T}^0-\hat{\vect{T}}_{min}-\min(\vect{T}^0-\hat{\vect{T}}_{min})$.
%. Riders with longer trips and less waiting time pay more. For this purpose, it computes the normalized minimum travel time (line \ref{alg:bp.t},\ref{alg:bp.nt}) and makes the base payment proportional to it (line \ref{alg:bp.coef},\ref{alg:bp.prop}). 
Since this base payment covers $c_{Fub}$, GARS-N is BB.

\color{black}

\begin{lemma}
If $c_{Fub}$ is given, the GARS-N mechanism is budget balanced.
    \label{thm:gbb}
\end{lemma}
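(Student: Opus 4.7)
The plan is to decompose each payment $x_i$ into its base component $x_i^0$ and the Myerson-integral component, aggregate across riders, and then compare the total with $c_F$ using the hypothesis on $c_{Fub}$. The target inequality is $\sum_{i\in\mc{N}} x_i(\pi) \ge c_F(\pi)$, and both summands on the right of the decomposition can be controlled separately.

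First I would show that $\sum_{i\in\mc{N}} x_i^0 = c_{Fub}$. By the construction introduced just above the lemma, each $x_i^0$ is a non-negative share of $c_{Fub}$ with weight $w_i = T_i^0 - \hat{T}_{min,i} - \min_{j\in\mc{N}}(T_j^0 - \hat{T}_{min,j}) \ge 0$, normalised so that the shares sum to $c_{Fub}$. These weights depend only on the report-independent quantities $\vect{T}^0$ and $\hat{\vect{T}}_{min}$, so they form a well-defined partition of $c_{Fub}$; the degenerate case in which every $w_i$ vanishes can be resolved by an arbitrary fallback (e.g.\ uniform shares) that still sums to $c_{Fub}$.

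Next I would handle the Myerson term by invoking monotonicity (Lemma~\ref{thm:gars_nmono}): since $\gamma_i \le \gamma_i'$ implies $T_i(\pi(\gamma_i)) \ge T_i(\pi(\gamma_i'))$, the map $\gamma \mapsto \hat{T}_i(\pi(\gamma))$ is non-increasing on $[0,\gamma_{max}]$. Combined with $\gamma\ge 0$, the Myerson payment rule (\ref{eq:integ}) contributes a non-negative increment to $x_i^0$ for every rider, and summing over $\mc{N}$ yields $\sum_i x_i(\pi) \ge \sum_i x_i^0 = c_{Fub}$. Together with the hypothesis $c_{Fub} \ge c_F(\pi)$, this gives $\sum_i x_i(\pi) \ge c_F(\pi)$, which is exactly BB.

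The main effort is bookkeeping rather than combinatorics: checking that the base shares genuinely partition $c_{Fub}$ and that the orientation of the Myerson term is non-negative under the monotonicity of $\hat{T}_i$. The only subtleties I anticipate are the degenerate weight case above and carefully tracking the sign of the derivative in (\ref{eq:integ}), both of which should be routine given Lemma~\ref{thm:gars_nmono}.
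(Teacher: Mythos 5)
Your proof takes essentially the same route as the paper's: the base payments sum to $c_{Fub}$ by construction of \textproc{BasePayment}, each rider's Myerson increment is non-negative so $x_i \ge x_i^0$, and $c_{Fub} \ge c_F(\pi)$ by definition of the upper bound, giving $\sum_{i}x_i \ge \sum_i x_i^0 = c_{Fub} \ge c_F(\pi)$. One caveat: read literally, the integrand in (\ref{eq:integ}) is $\gamma\,\tfrac{d}{d\gamma}\hat{T}_i(\pi(\gamma)) \le 0$ when $\hat{T}_i$ is non-increasing in $\gamma$, so the non-negativity of the increment follows from the sign convention actually implemented (the telescoping sum in Algorithm~\ref{alg:gn}, which accumulates terms $(\hat{T}_i(\pi')-t_i)\gamma_j \ge 0$, matching the negated jump formula in the appendix) rather than from the displayed formula as printed --- this is an ambiguity inherited from the paper, and your conclusion $x_i \ge x_i^0$ is the one the paper itself uses.
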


\begin{comment}
\removable{
\begin{proof}[Proof of Lemma \ref{thm:gbb}]
From the base payment computation \textproc{BasePayment}, and since $\Delta \vect{x}\geq \vect{0}$,
\begin{equation}
\underset{\substack{i \in \mc{N}}}{\sum}x_i \geq \underset{\substack{i \in \mc{N}}}{\sum}x_i^0 = c_{Fub} \geq c_{F}.
\label{eq:gbbp}
\end{equation}
\end{proof}
}

\begin{algorithm}                    
	\caption{Base Payment}
	\label{alg:bp}
	\begin{algorithmic}[1]
		\Output Base payment $\vect{x}_0$
		\Procedure{$\vect{x}_0=$BasePayment}{$\hat{\vect{T}}_{min},c_{Fub}$}
		\State{Compute loss: $\Delta \vect{T}=\vect{T}^0-\hat{\vect{T}}_{min}$} \label{alg:bp.t}
		\State{Normalization: $\Delta \vect{T}=\Delta \vect{T}-\min(\Delta \vect{T})$} \label{alg:bp.nt}
		\If{$\sum(\Delta \vect{T})==0$}
		    \State{$x_i^0=c_{Fub}/N, ~\forall i \in \mc{N}$}
		\Else
		    \State{$k=c_{Fub}/\sum(\Delta \vect{T})$} \label{alg:bp.coef}
		    \State{$x_i^0=k\Delta T_i,~\forall i \in \mc{N}$} \label{alg:bp.prop}
		\EndIf
		\EndProcedure
	\end{algorithmic}
\end{algorithm}
\end{comment}

%\modaaai{The proof is in Appendix B.}
We now discuss IR. Since the no-taxi constraint restricts outside options, it can cause the violation of IR in some settings. To address this issue, we use the following lemma.
%(the proof is \modaaai{in  Appendix B}).
%(The proof is omitted and shown in the full paper.)

\begin{lemma}
The GARS-N mechanism satisfies IR if
\begin{equation}
    \hat{T}_i(\pi(0))\leq [(\alpha+\beta)T_i^0-x_i^0]/\gamma_{max}, ~\forall i \in \mc{N}.
\label{eq:irs}
\end{equation}
\label{thm:irs}
%\vspace{-0.2cm}
\end{lemma}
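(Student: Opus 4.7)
The plan is to leverage DSIC from Lemma \ref{thm:garsn_dsic} to reduce IR to an inequality that only involves the zero-report allocation $\pi(0)$, and then verify that the assumed bound (\ref{eq:irs}) on $\hat{T}_i(\pi(0))$ is exactly what is needed to close the gap.

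First, I would unpack IR under the no-taxi constraint of GARS-N. Since every rider uses ridesharing, $c_i(\pi) = \gamma_i T_i(\pi)$, while $c_i^0 = (\alpha+\beta+\gamma_i)T_i^0$. Subtracting $\gamma_i T_i^0$ from both sides of the IR condition (\ref{eq:ir}) turns it into
\begin{equation*}
\gamma_i \hat{T}_i(\pi(\gamma_i)) + x_i(\pi(\gamma_i)) \leq (\alpha+\beta)T_i^0.
\end{equation*}

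Second, I would apply DSIC with the particular deviation $\gamma_i' = 0$. The Myerson payment (\ref{eq:integ}) collapses to $x_i(\pi(0)) = x_i^0$ because the integral has zero length. Incentive compatibility of rider $i$ with true type $\gamma_i$ against the report $\gamma_i'=0$ then gives $-\gamma_i T_i(\pi(\gamma_i)) - x_i(\pi(\gamma_i)) \geq -\gamma_i T_i(\pi(0)) - x_i^0$, which after subtracting $\gamma_i T_i^0$ from both sides rearranges to
\begin{equation*}
\gamma_i \hat{T}_i(\pi(\gamma_i)) + x_i(\pi(\gamma_i)) \leq \gamma_i \hat{T}_i(\pi(0)) + x_i^0.
\end{equation*}

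Third, I would eliminate $\gamma_i$ from the right-hand side. Because $T_i^0$ is the shortest travel time, $\hat{T}_i(\pi(0)) = T_i(\pi(0)) - T_i^0 \geq 0$, and since $\gamma_i \leq \gamma_{max}$ we obtain $\gamma_i \hat{T}_i(\pi(0)) \leq \gamma_{max}\hat{T}_i(\pi(0))$. The hypothesis (\ref{eq:irs}) rearranges precisely to $\gamma_{max}\hat{T}_i(\pi(0)) + x_i^0 \leq (\alpha+\beta)T_i^0$, so chaining the three inequalities delivers the reformulated IR condition, and hence (\ref{eq:ir}).

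The proof is essentially bookkeeping once DSIC is in hand. The only subtlety, and what I would expect to justify carefully, is the identification $x_i(\pi(0)) = x_i^0$ (immediate from (\ref{eq:integ})) together with the sign handling when moving between $T_i$, $\hat{T}_i$, and $T_i^0$. Implicit in the statement is that $(\alpha+\beta)T_i^0 \geq x_i^0$, since otherwise (\ref{eq:irs}) would force $\hat{T}_i(\pi(0)) < 0$ and be infeasible; this is a sanity condition on the base payment rather than an additional proof step.
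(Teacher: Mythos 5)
Your proposal is correct and follows essentially the same route as the paper: both reduce IR to the chain $\gamma_i\hat{T}_i(\pi(\gamma_i))+x_i(\pi(\gamma_i))\leq \gamma_i\hat{T}_i(\pi(0))+x_i^0\leq \gamma_{max}\hat{T}_i(\pi(0))+x_i^0\leq(\alpha+\beta)T_i^0$, using $\hat{T}_i(\pi(0))\geq 0$ and $\gamma_i\leq\gamma_{max}$. The only difference is presentational: you justify the middle inequality explicitly via the DSIC deviation to report $0$ (with $x_i(\pi(0))=x_i^0$), whereas the paper asserts $\gamma_i[\hat{T}_i(\pi(0))-\hat{T}_i(\pi)]+x_i^0\geq x_i(\pi)$ without comment — your version is the cleaner justification of the same step.
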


\begin{comment}
\begin{proof}[Proof of Lemma \ref{thm:irs}]
\begin{equation}
\left.
\begin{array}{l}
[(\alpha+\beta)T_i^0-x_i^0]/\gamma_{max}\geq \hat{T}_i(\pi(0))\\
\therefore (\alpha+\beta)T_i^0\geq \gamma_{max}\hat{T}_i(\pi(0))+x_i^0\geq \gamma_i\hat{T}_i(\pi(0))+x_i^0\\
\therefore (\alpha+\beta)T_i^0-\gamma_i\hat{T}_i(\pi)\geq  \gamma_i[\hat{T}_i(\pi(0))-\hat{T}_i(\pi)]+x_i^0 \geq x_i(\pi)\\
\therefore (\alpha+\beta+\gamma_i)T_i^0\geq \gamma_iT_i(\pi)+x_i(\pi)\\
\therefore c_i^0\geq c_i(\pi)+x_i(\pi)\\
\end{array}
\right.
\end{equation}
\end{proof}
\end{comment}

%\vspace{-0.2cm}
%The proof is in Appendix B. 
Briefly, (\ref{eq:irs}) shows the lower bound of the cost that satisfies IR.
We can now define the GARS-N mechanism with individual rationality (GARS-NIR) as follows. Before executing GARS-N, riders that violate condition (\ref{eq:irs}) are excluded from the mechanism, and are allocated to a taxi. This process is repeated until there is no violation. Then GARS-N is executed with the remaining riders. Since no riders violate (\ref{eq:irs}), this mechanism satisfies IR. 

\color{\cameracolor}
Algorithm \ref{alg:gnir} shows the pseudocode of GARS-NIR (using Python-like list operations). It filters out riders who violate IR condition (\ref{eq:irs}) in line \ref{alg:gnir.filter} and allocates them to a taxi (line \ref{alg:gnir.taxi}). Then, \textproc{GARS-N} in Algorithm \ref{alg:gn} computes the allocation for riders who satisfy (\ref{eq:irs}). First, it sorts riders in descending order of reported $\gamma_i$ (line \ref{alg:gn.sort}). Then it computes an allocation by \textproc{GreedyAlloc} (line \ref{alg:gn.alloc}). From line \ref{alg:gn.pay1} to \ref{alg:gn.pay0}, it computes the second term of $x_i$ in (\ref{eq:integ}), by changing the order of riders. In line \ref{alg:gn.pay1}, variables are initialized. For each rider $i$ (line \ref{alg:gn.for1}), it changes the order of the rider (line \ref{alg:gn.for2} to \ref{alg:gn.ord2}) and the report with a consistent value $\gamma_i'$ (line \ref{alg:gn.gamma}), and computes an allocation with the new order $\mc{N}'$ (line \ref{alg:gn.newalloc}). Thus, the integral term in (\ref{eq:integ}) is computed (line \ref{alg:gn.integ}), and the payment $x_i$ is given by adding the base payment $x_i^0$ (line \ref{alg:gn.pay0}).
The \textproc{GreedyAlloc} function (Algorithm \ref{alg:ga}) computes the allocation for each rider one by one by solving (\ref{eq:gars_notaxi}) in line \ref{alg:ga.solve}.

%, for each rider $i$, the mechanism changes the priority of $i$ (line \ref{alg:gm.ord1}-\ref{alg:gm.ord2}), and computes the allocation with the new order in line \ref{alg:gm.newalloc}. The payment of rider $i$ is computed by adding the change in allocation $\hat{T}_i$ (line \ref{alg:gm.pay}). What is done here is the integral computation of monotone function $\hat{T}_i(\gamma_i)$ as (\ref{eq:integ}). Since $\hat{T}_i(\gamma_i)$ only depends on the allocation order, it changes only when the order changes and then shapes like a staircase as in Fig \ref{fig:gic}.

%For each iteration, a rider $i^*$ is selected (line \ref{alg:ga.iter}), and added into a set riders $\mc{N}_g$ (line \ref{alg:ga.add}). The algorithm solves a ridesharing problem $RS^*$ in each iteration to update the allocation (line \ref{alg:ga.solve}). $RS^*$ has the set of riders $\mc{N}_g$, the set of vehicles $\mc{K}$. %EG COMMENT: Don't understand this last sentence
%Besides, two constraints are applied. One is the constraint of no taxi as mentioned before. This is formalized as $(t,e)\in r_k^c, ~\forall (t,e=(o,d\not=o)) \in r_i^p,~\forall i \in \mc{N}_g, \exists k \in \mc{K}$, which in words, a vehicle is allocated to every move of the riders. The other is a constraint to fix the allocation of riders in the last iteration $\vect{\pi}_{last}$. This is formalized as $r_j^p=r_j^p(\vect{\pi}_{last}), ~\forall j<i^* \in \mc{N}_g$.

\begin{algorithm}[t]                 
	\caption{GARS-NIR}
	\label{alg:gnir}
	\begin{algorithmic}[1]
		%\Input Ordered set of riders $\mc{N}$
		%\Output Allocation $\vect{\pi}$
		\Procedure{$\pi=$GARS-NIR}{$\vect{\gamma},\mc{N}$}
		\State{$\mc{N}_{IR}=[]$}
		\For{$i \in \mc{N}$} \label{alg:gnir.iter}
		\If{$i$ satisfies (\ref{eq:irs})}
		\label{alg:gnir.filter}
		\State{$\mc{N}_{IR}.{\rm append}(i)$} 
		\Else
		\State{$B[t,i,k]=0,~\forall t \in \mc{T},\forall k\in\mc{K}$}
		\label{alg:gnir.taxi}
		\EndIf
		\EndFor
		\State{Return \textproc{GARS-N($\vect{\gamma},\mc{N}_{IR}$)}}
		\EndProcedure
	\end{algorithmic}
\end{algorithm}

\begin{algorithm}[t]                 
	\caption{GARS-N}
	\label{alg:gn}
	\begin{algorithmic}[1]
		%\Input Report $\vect{\gamma}$
		%\Output Allocation $\pi$, Payment $\vect{x}$
		\Procedure{$(\pi,\vect{x})=$GARS-N}{$\vect{\gamma},\mc{N}$}
		\State{$\hat{\mc{N}}=$Sorted $\mc{N}$ in descending order of $\gamma_i$} \label{alg:gn.sort}
		\State{$\pi=\textproc{GreedyAlloc}(\vect{\gamma},\hat{\mc{N}})$} \label{alg:gn.alloc}
		\State{$t_i=\hat{T}_i(\pi),\Delta x_i=0,  \forall i \in \mc{N}$} \label{alg:gn.pay1}
		%\State{Initialize upper bound of fuel cost: $c_{Fmax}=c_F(\pi)$}
		\For{$i \in \mc{N}$} \label{alg:gn.for1}
		\For{$order_i \in [\hat{\mc{N}}.index(i),\ldots,N]$}
			\label{alg:gn.for2}
    		\State{$\mc{N}'=\hat{\mc{N}},\vect{\gamma}'=\vect{\gamma}$}
        	\State{$j=\hat{\mc{N}}[order_i]$}
			\State{$\mc{N}'.{\rm pop}(\hat{\mc{N}}.index(i))$}
			\State{$\mc{N}'.{\rm insert}(order_i,i)$}
			\label{alg:gn.ord2}
			\State{$\gamma_i'=\gamma_j$}
			\label{alg:gn.gamma}
    		\State{$\pi'=\textproc{GreedyAlloc}(\vect{\gamma}',\mc{N}')$} \label{alg:gn.newalloc}
    		\State{$\Delta x_i=\Delta x_i+(\hat{T}_i(\pi')-t_i)\gamma_j$} \label{alg:gn.integ}
    		\State{$t_i=\hat{T}_i(\pi')$}
		\EndFor
        \State{$x_i=\Delta x_i+x_i^0$}
        \label{alg:gn.pay0}
		\EndFor
		%\State{Compute parameters independent from report: $(c_{Fmin},c_{Fmax},\hat{\vect{T}}_{min})=\textproc{PaymentParam}()$} \label{alg:gm.payparam}
		%\State{Add base payment: $\vect{x}=\Delta\vect{x}+\textproc{BasePayment}(\hat{\vect{T}}_{min},c_{Fmax})$}
		\EndProcedure
	\end{algorithmic}
\end{algorithm}

\begin{algorithm}[t]
	\caption{Greedy Allocation}
	\label{alg:ga}
	\begin{algorithmic}[1]
		%\Input Ordered set of riders $\mc{N}$
		%\Output Allocation $\vect{\pi}$
		\Procedure{$\pi=$GreedyAlloc}{$\vect{\gamma},\mc{N}$}
		%\State{$\mc{N}_g=[],\vect{\pi}_{last}=None$}
		\For{$i \in \mc{N}$} \label{alg:ga.iter}
		\State{Compute $\pi_{\leq i}$ by (\ref{eq:gars_notaxi})} \label{alg:ga.solve}
		%\State{Solve $RS^*$ subject to no taxi constraint and keeping $j<i^*$ fixed, and replace $\vect{\pi}_{last}$ with the solution} \label{alg:ga.solve}
		\EndFor
		\State{Return $\pi_{\leq i}$}
		\EndProcedure
	\end{algorithmic}
\end{algorithm}

\color{black}

Now we have the following result.

\begin{theorem}
The GARS-NIR mechanism satisfies DSIC, IR, BB and can be computed in polynomial-time.
\label{thm:garsnir}
\end{theorem}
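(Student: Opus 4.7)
The plan is to treat this theorem as a synthesis result that glues together Lemmas \ref{thm:garsn_dsic}, \ref{thm:gbb} and \ref{thm:irs}, with the main work being to verify that the extra filtering layer introduced by GARS-NIR preserves each property that GARS-N enjoyed on its own. Concretely, I would fix a type profile $\vect{\gamma}$, let $\mc{N}_{IR}(\vect{\gamma})\subseteq \mc{N}$ denote the set of riders that survive the filtering loop in Algorithm \ref{alg:gnir}, and then analyse (i) the filtered riders (who are routed to a taxi with $x_i=0$) and (ii) the surviving riders (on whom \textproc{GARS-N} is run) separately.

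The most delicate part, and what I expect to be the main obstacle, is DSIC. The key observation to establish is that, for every rider $i$, membership in $\mc{N}_{IR}$ does not depend on $\gamma_i$: the filtering condition (\ref{eq:irs}) only involves $\alpha$, $\beta$, $T_i^0$, $x_i^0$, $\gamma_{max}$ and $\hat{T}_i(\pi(0))$, and $\pi(0)$ is by definition the GARS-N allocation when $i$ reports $0$, which is a function of $\vect{\gamma}_{-i}$ alone. I would therefore argue in two cases. If $i$ is filtered out, then $i$ receives a taxi with zero payment regardless of its report, so its utility equals $-c_i^0$ and no deviation strictly improves it. If $i$ survives, then no unilateral deviation $\gamma_i'$ can either remove $i$ from $\mc{N}_{IR}$ or change the behaviour of \textproc{GARS-N} on the surviving set beyond what is already covered by Lemma \ref{thm:garsn_dsic}; applying that lemma to the profile restricted to $\mc{N}_{IR}$ then yields DSIC.

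For IR, I would split into the same two cases. Filtered riders satisfy IR with equality, since $u_i = -c_i^0$ from the taxi outside option. Surviving riders satisfy condition (\ref{eq:irs}) by construction of the filter, so Lemma \ref{thm:irs} immediately gives $u_i(\pi) \geq -c_i^0$. For BB, I would note that filtered riders contribute $0$ to both sides of the BB inequality in the sense that they incur no system fuel cost (they travel by taxi, which is outside the ridesharing fuel accounting), so $c_F(\pi)$ equals the fuel cost produced by \textproc{GARS-N} on $\mc{N}_{IR}$; Lemma \ref{thm:gbb} then gives $\sum_{i\in\mc{N}_{IR}} x_i \geq c_{Fub} \geq c_F(\pi)$, and since the filtered riders pay $0$, the total payment also dominates $c_F(\pi)$. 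Here I would be careful that the parameter $c_{Fub}$ used for base payments is an upper bound on fuel cost for the set $\mc{N}_{IR}$ actually passed to \textproc{GARS-N}; this is consistent with the running assumption that $c_{Fub}$ is a valid upper bound for the ridesharing instance at hand.

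Finally, for polynomial-time computability, the filtering loop in Algorithm \ref{alg:gnir} iterates $N$ times and each check evaluates condition (\ref{eq:irs}), which requires one call to \textproc{GARS-N}-style greedy allocation to obtain $\hat{T}_i(\pi(0))$; by Lemma \ref{thm:garsn_dsic} each such call is polynomial. The subsequent call to \textproc{GARS-N} on $\mc{N}_{IR}$ is polynomial by the same lemma, and Algorithms \ref{alg:gn} and \ref{alg:ga} only invoke a polynomial number of greedy subroutine calls. Summing these polynomial bounds yields the overall polynomial runtime, completing the plan.
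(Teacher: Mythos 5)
Your proposal follows essentially the same route as the paper, which offers no standalone proof of this theorem: it is presented as the direct combination of Lemmas~\ref{thm:garsn_dsic}--\ref{thm:irs} with the observation that the taxi filter enforces condition~(\ref{eq:irs}) for every surviving rider (hence IR), while the filtered riders trivially get their outside option. Your treatment of IR, BB and polynomial time matches the paper's reasoning and is, if anything, more explicit. The one step where you over-claim relative to what the lemmas actually give is in the DSIC case analysis: you correctly observe that rider $i$'s own membership in $\mc{N}_{IR}$ is independent of $\gamma_i$ (condition~(\ref{eq:irs}) for $i$ is evaluated at $\pi(0)$, i.e., with $i$ placed last, and $x_i^0$ is report-independent), but you then assert that no deviation by $i$ can ``change the behaviour of \textproc{GARS-N} on the surviving set beyond what is already covered by Lemma~\ref{thm:garsn_dsic}.'' That does not follow: condition~(\ref{eq:irs}) for another rider $j$ involves $\hat{T}_j(\pi(0,\vect{\gamma}_{-j}))$, which depends on the ordering of the riders other than $j$ and hence on $\gamma_i$, so in principle $i$'s report can flip $j$'s filter status, change the set on which \textproc{GARS-N} is run, and thereby change $i$'s allocation and payment in a way Lemma~\ref{thm:garsn_dsic} (which is stated for a fixed participant set) does not cover; the iterated ``repeat until no violation'' loop compounds this. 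The paper glosses over exactly the same point, so your proposal is faithful to its argument, but a complete proof would need to either show the surviving set is invariant to unilateral deviations or handle the set change directly.
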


%Note that this result is not trivial but actually quite challenging because the GARS-NIR mechanism has to fulfill the properties independently from the riders’ reports. For example, let’s say a rider makes a detour to share the ride with others. If the rider misreports a high gamma (a higher urgency), the mechanism may let the rider use the vehicle alone without the detour. This can violate DSIC. This can also violate IR of other riders that share the vehicle and cause them to use a taxi instead. The GARS-NIR mechanism solves this issue by computing the set of riders that satisfy IR before receiving their reports, and by keeping the allocation DSIC among the set of riders by preventing manipulation like the example above with the no-taxi constraint.
%One of the challenges of IR is that the mechanism has to guarantee IR before riders report their types. Otherwise, riders can manipulate the allocation by misreporting, thus breaking DSIC. 

Although GARS-NIR is not optimal, also note that no mechanism can simultaneously achieve optimality, DSIC, IR and BB \cite{myerson1983efficient}. Hence obtaining three of these properties is the best one can achieve. \modified{It is also difficult to bound the efficiency of GARS-NIR, because we cannot use the standard bounding technique with naive greedy allocations \cite{nemhauser1978analysis}, for the reason we discussed in Proposition \ref{thm:naivegreedy}.} 
\modified{Determining theoretical efficiency bounds remains an open problem for future work.}

\color{black}

\section{Experimental Results}
\label{sec:exp}

Given that we do not have formal efficiency bounds, we evaluate the mechanisms empirically using simulations using both synthetic and real data. For the former, we generate 
random road networks of 4 degrees maximum and average results over 32 simulations for each setting. For the latter, we use the New York taxi data set \cite{nycdata}. We assume autonomous vehicles for all our experiments.
%randomly generated road networks that are connected with a maximum of 4 degrees. 
%\commenteg{Incomplete: need to mention e.g. that CPLEX was used, and what type of processor+operating system was used to run the experiments on.}
%We tested several cases by chainging parameters $N,K,T,V$ and $\alpha$.
%We use parameters $N=3,K=2,T=4,V=4$ and $\alpha$. 
 %
%meaning that vehicles can move even without anyone being allocated to them. 
%An example of the game is shown in Figure \ref{fig:example}. 
We vary the taxi cost $\alpha$ and choose these loosely based on prices from London taxis. If we normalize the fuel cost to be $\beta=1$, then $\alpha=5$ is considered a moderate value and $\alpha=1$ is cheap. As for the value of time, we set $\gamma_i=\gamma_{max}*i/N$ to simulate heterogeneous riders. $c_{Fub}$ is estimated by sampling solutions of GARS-N with different allocation orders, and multiplying the maximum fuel cost among the solutions with a constant coefficient. \camera{Further details are provided in the extended version.}
%we set the value of use an arithmetic progression of length $N$  from 1 to $\gamma_{max}$ that is in the same order as $i$, simulating a heterogeneous rider population.

%Cost parameter $\alpha$ (wage of taxi drivers) has impact especially on IR. The ratio $\alpha/\beta$ is about 42 according to the cost analysis of taxi company in Switzerland \cite{Bosch2018}. In case of Japan, this number is about 3.8. We think $\alpha=5, \beta=1$ is a moderate setting. We also test an extreme case of $\alpha=0$. 
%\commenteg{These values are too different and not credible. How can Switzerland and Japan be so different? Also strange to use such precise numbers. How did you get these? What value did you set for$\beta$?  These things can never be measured so precisely. The point is to only base these values LOOSELY on some motivation. }

In what follows we first compare the performance of all mechanisms on a small setting to enable comparison with non-scalable mechanisms such as BVCG. We then evaluate the scalability of the greedy mechanism to larger settings. Since GARS-NIR computes all possible ways for a rider to switch  between  vehicles, this turns out to be time consuming computationally. Also, in practice, such switches are often not desirable. Hence, for large settings, we modify \modaaai{the feasible set $\Pi$} of GARS-NIR to allow a rider to use only a single vehicle (sGARS-NIR). This version still satisfies the same properties as GARS-NIR. 

\begin{comment}
\subsection{Cost parameters}
\label{sec:paramexp}

\begin{table}[h]
\centering
\caption{Data for parameter estimation}\smallskip
%\resizebox{0.95\textwidth}{!}{ % If your table exceeds the column or page width, use this command to reduce it slightly
\begin{tabular}{|l|l|}
\hline
Gas price & 124 [p/liter] \\
\hline
Wage of driver & 836 [GBP/week] \\
\hline
Working hour of driver & 50 [h/week] \\
\hline
Taxi speed & 30 [km/h] \\
\hline
Mileage & 11.7 [km/liter]\\
\hline
\end{tabular}
%}
\label{tbl:param}
\end{table}

First, we  choose a realistic value for the cost parameter $\alpha$ (wage of taxi drivers), which has a significant impact on IR. To estimate the value, we use  example data from London, specifically, 
gas price: 124 [p/liter], wage of driver: 836 [GBP/week], working hour of driver: 50 [h/week], taxi speed: 30 [km/h] and mileage: 11.7 [km/liter].
Then, the estimated wage of taxi drivers is $836/50/30*100=55.7$ [p/km] and fuel cost is $124/11.7=10.5$ [p/km]. If we normalize the fuel cost as $\beta=1$, then the normalized wage is $55.7/10.5=5.3$. Thus, we use $\alpha=5$ as a moderate value and contrast this with $\alpha=0$ as an extreme value in experiments. As for $\gamma$, we assume that rider $i$'s type is $\gamma_i=i$, simulating a heterogeneous rider population.
\end{comment}

\begin{comment}
\begin{table}[H]
\centering
\caption{Proposed mechanisms}\smallskip
%\resizebox{0.3\textwidth}{!}{ % If your table exceeds the column or page width, use this command to reduce it slightly
\begin{tabular}{|l|l|l|l|l|l|}
\hline
Mechanism & OPT & DSIC & IR & BB & POLY \\
\hline
VCG & Y & Y & Y & N & Y \\ 
BVCG & N & Y & Y & Y & N \\
Hungarian & N & Y & Y & N & Y \\
GARS-NIR & N & Y & Y & Y & Y \\
\hline
\end{tabular}
%}
\label{tbl:mechs}
\end{table}
\end{comment}

\subsection{Small Setting}
\label{sec:smallexp}

\begin{table}[t]
\centering
%\caption{Characteristics of mechanisms}\smallskip
%\resizebox{0.3\textwidth}{!}{ % If your table exceeds the column or page width, use this command to reduce it slightly
\begin{tabular}{|l|l|l|l|l|l|}
\hline
Mechanism & OPT & DSIC & IR & BB & POLY \\
\hline
VCG & Y & Y & Y & N & Y \\ 
BVCG & N & Y & Y & Y & N \\
Hungarian & N & Y & Y & N & Y \\
GARS-NIR & N & Y & Y & Y & Y \\
\hline
\end{tabular}
%}
\caption{Characteristics of mechanisms.}
\label{tbl:mechs}
\end{table}

\begin{figure}[t]
\centering
\includegraphics[width=1.\columnwidth]{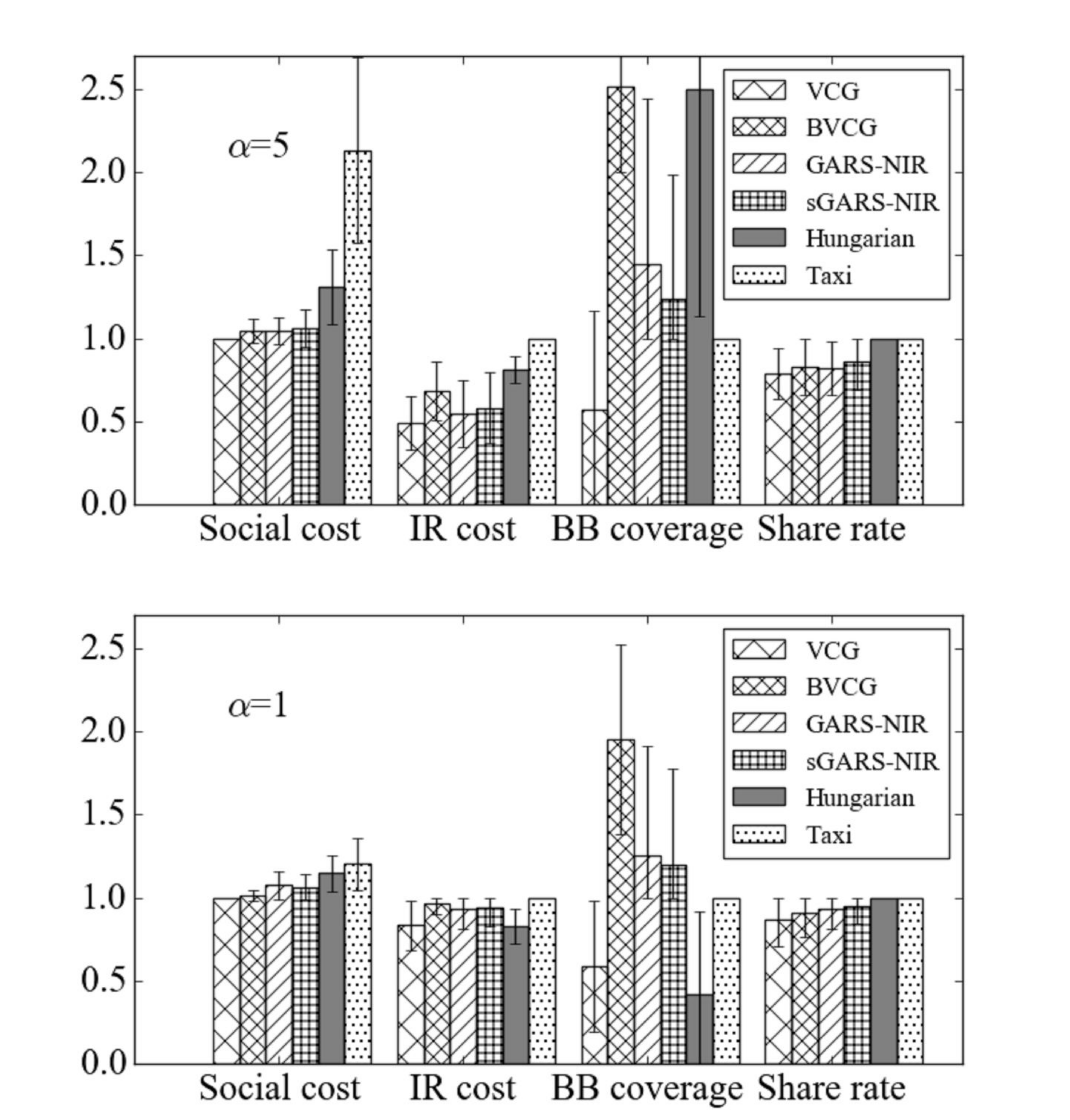}
\caption{Comparison of mechanisms with different taxi costs: \mbox{$\alpha=5$ (top),} $\alpha=1$ (bottom).}
\label{fig:results}
%\vspace{-0.2cm}
\end{figure}

We compare VCG, BVCG, GARS-NIR, the Hungarian mechanism and the outside option
\camera{(Table 1).}
The Hungarian mechanism computes an optimal 1-to-1 matching between riders and vehicles (including taxis) using the Hungarian algorithm, and then applies payments similar to VCG. The mechanism is also DSIC, BB and IR, but does not support multiple riders per vehicle. 
%We compare mechanisms and the outside option.
%In addition to VCG, BVCG, and our greedy mechanism we compare these against the Hungarian mechanism and the outside option. The Hungarian mechanism computes a minimum cost \mbox{1-to-1} matching between riders and vehicles (including taxis) using the Hungarian algorithm. The payments are the marginal social cost (as in VCG). 
For BVCG, a brute force search is used to compute vehicle allocations. 
%Table \ref{tbl:mechs} summarizes these mechanisms.
Figure \ref{fig:results} (top) shows the results for $N=3,K=2,T=4,V=4$ and $\alpha=5$ and the error bars show the 95\% confidence intervals. Here, \emph{social cost} is the \modaaai{ratio of $C_F(\mc{N},\pi)$ over the cost of the optimal allocation.} Furthermore, \emph{IR cost} refers to $-u_i(\pi)/c_i^0$. If this value is above 1, it means that IR is violated. \emph{BB coverage} shows 
$(\sum_{i \in \mc{N}}x_i(\pi))/ c_F(\pi)$.
%the rate of the total payment compared to the fuel cost $c_F$.
A value less than 1 means BB is violated  \camera{(however, a value close to 1 is preferred, as long as it is over 1, meaning that riders are not over paying).}
\modaaai{Share rate shows $\sum_{k \in \mc{K}} T_k/\sum_{i \in \mc{N}} T_i$. A smaller value means ridesharing needs fewers vehicles to cover the trips of riders. }
%the proportion of simulations where individual rationality and budget balance hold. 
%\modaaai{As explained in section ----, sometimes payment is zero in VCG and then BB coverage is infinite. Meanwhile,} 

Here, BVCG is budget balanced without a large increase in social cost (a 6.4\% increase). 
%When VCG violates the BB condition, it 
Meanwhile, VCG recovers only 52\% of the fuel costs on average in payments. However, a disadvantage of BVCG is its computational complexity when using the brute force search. Instead, \modaaai{GARS-NIR (and sGARS-NIR)} can be computed efficiently while satisfying DSIC, IR and BB, and still keeping social cost near optimal (an 8.6\% increase compared to optimal).  Also, it shows that the performance of sGARS-NIR is close to GARS-NIR.
%\modified{As other indicators for a reference, the travel time of GARS-NIR increases 48\% compared to taxi on average. And about 1.23 passengers ride on a vehicle on average.} 
\camera{Although the vehicle-switching behaviors of GARS-NIR are practical in suburban areas, it may not be so practical in urban areas. However, the result shows only $\frac{1}{6}$ of riders switch vehicles. Besides, it is possible to extend the model to introduce a waiting time for switching vehicles, and this can reduce the number of the cases. Also, it is easy to extend the model by mixing GARS-NIR with the version without switching (sGARS-NIR), and letting riders choose if they want to switch or not. 
%Even in this case, the efficiency of allocations is maintained because the riders would suffer from a higher cost when they choose not to switch, by causing fewer options of allocations.
}
%\camera{Although we do not consider the waiting time when riders switch vehicles, the result shows only $\frac{1}{6}$ of riders switch vehicles and so introducing wait times would unlikely have a significant influence on these results.}
In comparison, Figure \ref{fig:results} (bottom) shows the result when the outside option is cheap ($\alpha=1$). In this case, the IR cost comes closer to 1 for all mechanisms because of the incentive to opt out. BB coverage also decreases for some mechanisms because as more riders opt out, there is less opportunity to split the fuel cost. Meanwhile, the increase in social cost is not large in \modaaai{GARS-NIR} (4.5\%) because riders properly opt out \modaaai{if taxi is better}.
\modified{While 1.07 passengers ride on a vehicle on average in case of $\alpha=1$, it is 1.23 passengers in case of $\alpha=5$. This demonstrates that more expensive taxis lead to a higher usage of ridesharing, enabled by GARS-NIR satisfying IR of riders.}
%also performs slightly and only has a 4.5\% increase compared to optimal.

%As for the runtime, BVCG takes 7.4 seconds to compute an allocation, while our greedy method takes only 0.38 seconds. Thus, even in these small settings, it is 19 times faster. Also, since our greedy algorithm computes allocations several times by changing the priority of riders, we can speed up the computation by parallelizing it for each priority case. The main reason we use this small setting in experiments is to compare with non-scalable mechanisms such as BVCG. 

%However, our approach may not yet be sufficiently scalable for very large settings, and this is subject to future investigation.

\begin{comment}
\begin{table*}[t]
\centering
\caption{Experimental results}\smallskip
%\resizebox{0.95\textwidth}{!}{ % If your table exceeds the column or page width, use this command to reduce it slightly
\begin{tabular}{|l|l|}
goodcase & 
badcase \\
\hline
good & 
bad \\ 
\end{tabular}
%}
\label{tbl:result}
\end{table*}
\end{comment}

\subsection{Scalability}
\label{sec:largeexp}

Next, we evaluate the scalability of sGARS-NIR. Since other mechanisms such as BVCG
%and Hungarian 
are not scalable, we do not include them  here. Figure \ref{fig:totalelp} shows the runtime of the mechanism with different $N$ using parallel computation with 16 cores. Since the payment computation consists of independent computations of allocations with different orders of riders, it can be easily parallelized. For other parameters we use $K=N/2, T=15, V=10, \alpha=5, \beta=1$. sGARS-NIR takes less than 14 minutes when $N=40$. 
%If the computation is fully parallelized with more cores, it would finish within 4.5 seconds. 
Figure \ref{fig:runtime}
shows that more than 99\% of the computation
time is spent constructing CPLEX LP objects. \modified{Hence, much of this can be precomputed and addressed with specalised implementation.}
%This is at least in part an implementation issue rather than the algorithm itself.
%For further scalability, we can use the technique to decompose the problem into independent subgames based on the location proximity \cite{xiong2015}.

\color{black}

\begin{figure}[t!]
\centering
\includegraphics[width=1.\columnwidth]{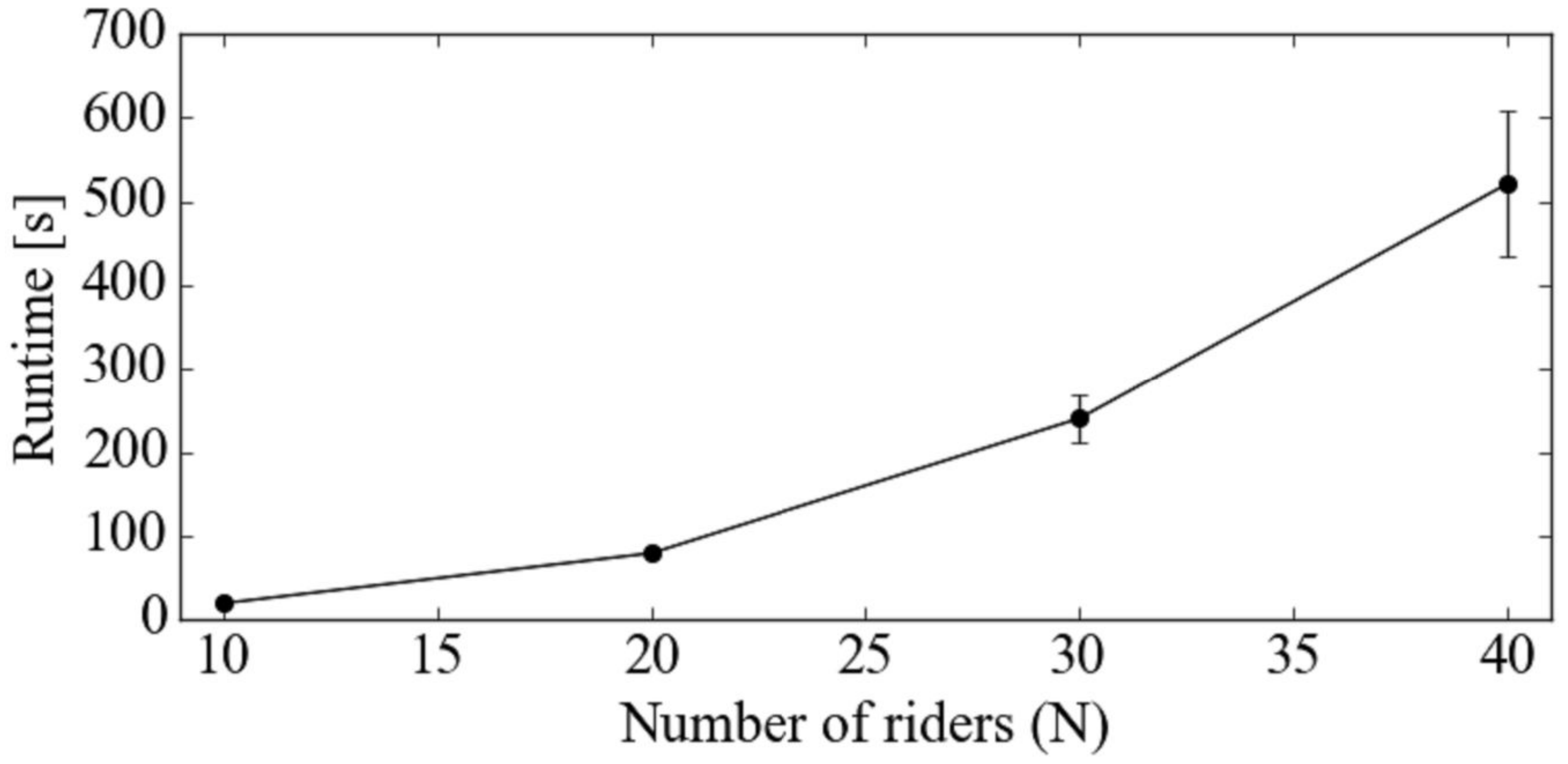}
%\vspace{-0.2cm}
\vspace{-0.2cm}%
\caption{Runtime of sGARS-NIR.}
\label{fig:totalelp}
\vspace{-0.2cm}
\end{figure}

\begin{figure}[t!]
\centering
\includegraphics[width=1.\columnwidth]{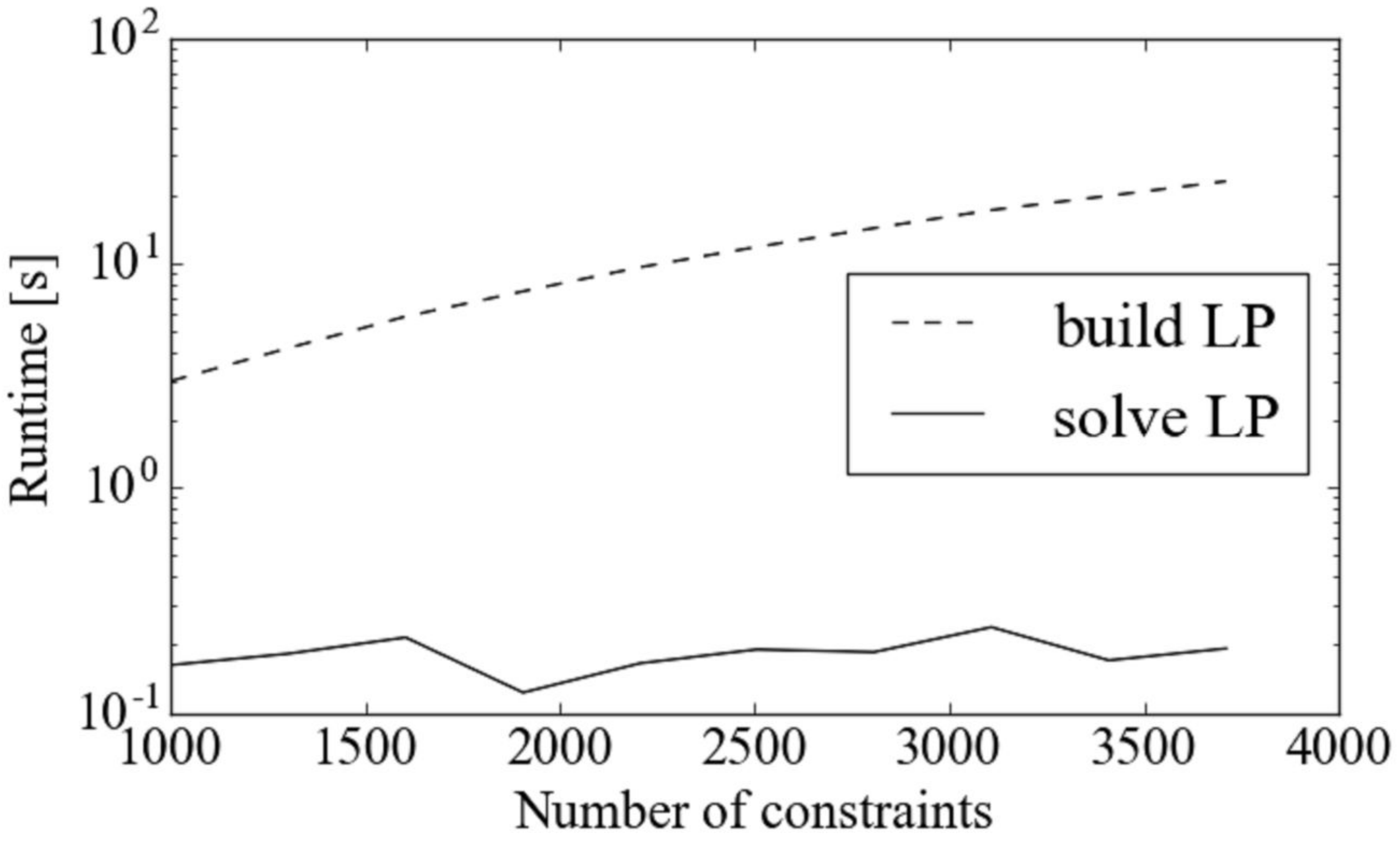}
\vspace{-0.2cm}
\caption{Figure showing a breakdown of the runtime. As can be seen, most of the runtime is spent constructing LP object, instead of solving it.}
\label{fig:runtime}
\vspace{-0.2cm}
\end{figure}
\begin{figure}[t!]
\centering
\includegraphics[width=1.\columnwidth]{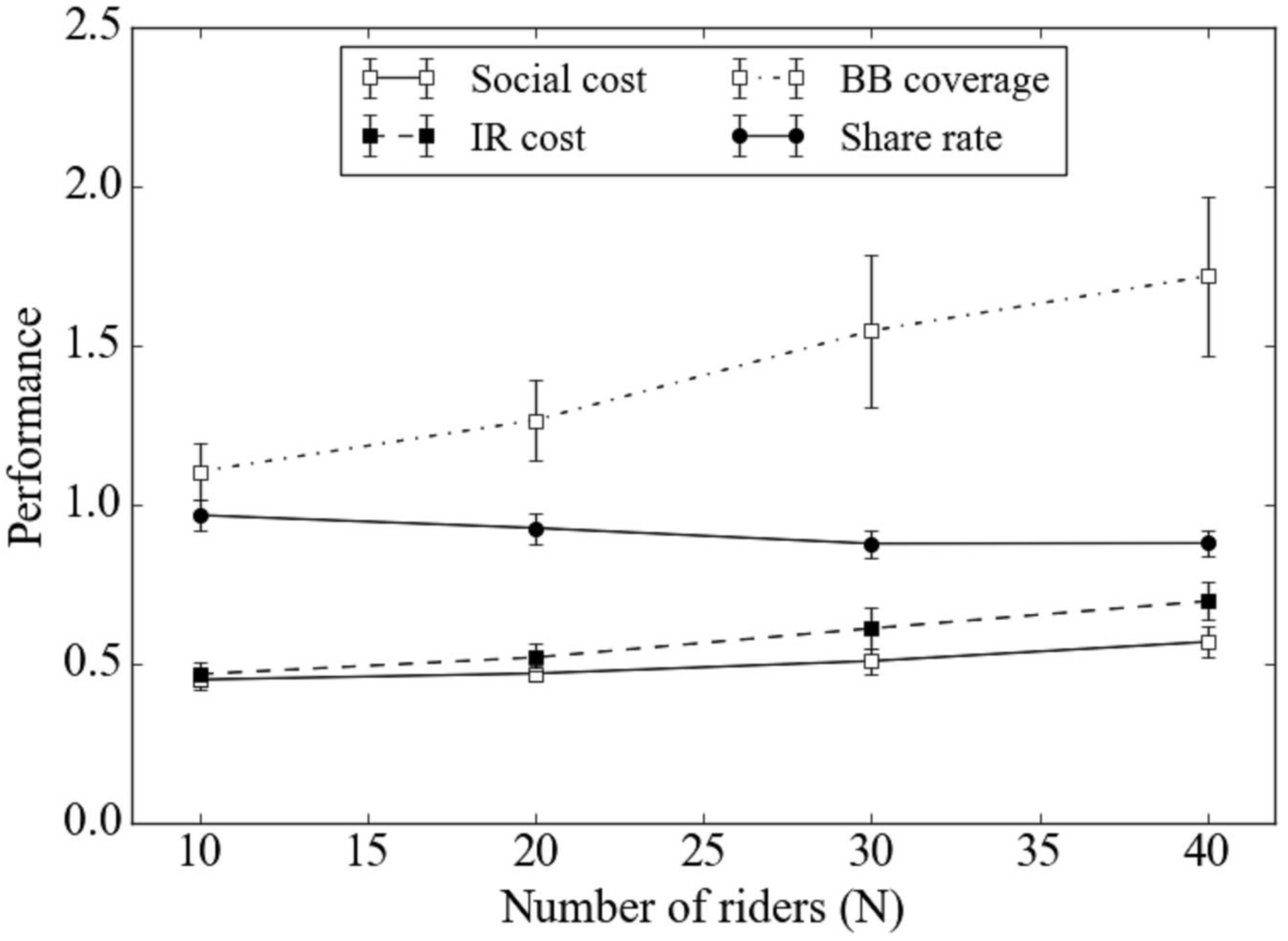}
\vspace{-0.4cm}
\caption{Performance as the number of riders (N) increases.}
\label{fig:nycsc}
\end{figure}
%
\begin{comment}
First, we measured the computation time of ridesharing problem $RS^*$ in the greedy mechanism with $N=10, K=4,T=11,V=10,\alpha=5$.
%\commenteg{Not clear why Hungarian is not scalable, since it is cubed in terms of computation, and also why VCG is not scalable - we said it was polynomial before for a given value of w!!Needs to be explained} 
Figure \ref{fig:runt} shows the result.
%\commenteg{First you need to tell the reader what the setting is!! You have not said what experiments you have run!} 
It shows that more than 99\% of the computation time is spent  constructing LP objects, not solving the LP. Therefore, although Theorem \ref{thm:poly} shows that the greedy is scalable in theory, the current implementation requires further work to make it scalable in practice. 
%
\begin{figure}[t]
\centering
\includegraphics[width=0.8\columnwidth]{fig/runtime.pdf}
\caption{Runtime of ridesharing problem $RS^*$}
\label{fig:runt}
\end{figure}

Despite  the issue above, we also evaluate the performance of the greedy mechanism for $N=5,K=4,T=11,V=10$. We can see that, with $\alpha=5$, the IR condition is satisfied in 56\% of the cases, even though the necessary condition (\ref{eq:irn}) is satisfied in all cases. In these cases, riders pay on average about 19\% more than the outside option. With $\alpha=0$, 8\% of the cases satisfy the necessary condition of IR in (\ref{eq:irn}) and no cases satisfy IR. Riders pay on average 74\% more than the outside option. These results are worse than the ones in smaller settings in the previous section. One of the reasons is that the impact of inefficient detours caused by the no-taxi constraint becomes larger when the size of road network increases. In all cases, BB is satisfied. 
\end{comment}

\subsection{Experiments with Real-world Data}
\label{sec:realexp}

In addition to using synthetic data, we also conduct experiments using the New York taxi data set \cite{nycdata}. We analyse the traffic demands in the central Manhattan area, which consists of 19 zones.
%64 zones. 
A single trip data point includes the origin zone, the destination zone, departure and arrival time. We randomly generate trip demands with the trip distribution from February 2019. All riders are assumed to be picked up and dropped off at the center of their zones. We set $K=20, T=15, V=19, \alpha=5, \beta=1$ and vary $N$. Figure \ref{fig:nycsc} shows the ratio of social cost compared to the taxi cost, IR cost, BB coverage and share rate of sGARS-NIR. 
\camera{The share rate is improved as $N$ increases since this increases the chance of sharing vehicles. However, with $K$ fixed, increasing $N$ causes a lack of vehicles and thus an increase in the IR cost and the social cost. This can be solved by providing an appropriate number of vehicles. The BB coverage shows over payment of riders with large $N$, but they still save compared to taking a taxi.}
%All measurements improved as the problem size increases since this increases the chance of sharing vehicles.

\section{Conclusions}
\label{sec:conclusions}
%To realise an efficient, safe and convenient ridesharing system with upcoming autonomous vehicles, a practical mechanism is required that allocates vehicles effectively and promotes the usage of riders by giving them an incentive.
We propose a flexible ridesharing model that includes autonomous vehicles.
%the ability of riders to  switch between vehicles. 
The goal is to \modified{find allocation and payment mechanisms which are dominant-strategy incentive compatible,} budget balanced, individually rational, can be computed in polynomial time and achieve high social welfare. To this end, we \modified{show that both the VCG and naive greedy mechanisms do not meet these requirements.} We then present a  characterisation for monotone greedy allocations and  use this  to develop a novel greedy mechanism which satisfies all of the desirable properties. In addition, through numerical simulations we show that \modified{our} monotone greedy mechanism achieves close to optimal allocations. 
%when assuming realistic taxi costs. 

Future directions include \modaaai{envy-free mechanisms and} improving the implementation of the algorithm so it scales to even larger settings, as well as providing theoretical performance bounds for our  greedy approach. %This can be achieved by decomposing the problem into independent subgames based on the location proximity similar to, for example, \cite{xiong2015}.

\section*{Ethical Impact}

Our work has positive ethical impacts on society. First, our mechanism provides a fair ridesharing service in the sense that people get compensated for their detour. Second, the incentive compatibility of our mechanism maximizes the utility of individuals. Third, our ridesharing mechanism can benefit society and the environment by meeting more traffic demand with fewer vehicles than existing taxi services or privately-owned vehicles. A possible negative impact is that people might perceive the mechanism as unfair because the payments for a given service may differ depending on the other participants. We could address this by looking into envy-free mechanisms in  future work.
\bibliographystyle{named}
\bibliography{main}

\if\arxiv1
\setcounter{secnumdepth}{0} %May be changed to 1 or 2 if section numbers are desired.

% The file aaai20.sty is the style file for AAAI Press 
% proceedings, working notes, and technical reports.
%
% command to correct it. You may not alter the value below 2.5 in
\title{Scalable, Truthful and Budget Balanced Ridesharing Mechanisms \\ (supplemental material)}

\author{Paper ID 8002}

\begin{appendices}

\section{Appendix A: NP-hardness of the Ridesharing Problem}

\label{sec:lp}
\color{\newcolor}
We show the following negative result on the computational complexity of the RS problem.

\setcounter{minortheorem}{4}
\begin{minortheorem}
    Finding an optimal solution of the RS problem is NP-hard.
	\label{thm:NPh}
\end{minortheorem}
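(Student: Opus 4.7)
The plan is to reduce from the Hamiltonian Cycle problem (a well-known NP-complete special case of TSP) to the decision version of the RS problem. Given an undirected graph $G = (V, E)$ with $n = |V|$ and a designated vertex $v_0$, I would construct a polynomial-size RS instance whose minimum social cost is at most $n$ if and only if $G$ contains a Hamiltonian cycle through $v_0$.

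The construction uses $G$ (augmented with the self-loops required by the model) as the road network $\mc{G}$ with unit edge lengths, places one autonomous vehicle at $v_0$ with capacity $w = n-1$, and introduces one rider per vertex $v_i \in V \setminus \{v_0\}$ with $O_i^p = v_i$ and $D_i^p = v_0$. I would set $\beta = 1$, $\gamma_i = 0$ for every $i$, and the time horizon $T = n$. To neutralize the outside option, I would set $\alpha = n^2$, large enough that a single taxi trip already costs more than the target bound $n$. Because $\gamma_i = 0$, the social cost contribution from riders who rideshare is zero, so the total social cost collapses to the vehicle's fuel cost $c_F = \beta T_k^c$ plus the full $(\alpha+\beta) T_i^0$ cost of any rider who opts out.

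For the forward direction, a Hamiltonian cycle through $v_0$ lets the vehicle pick up riders in cycle order and drop all $n-1$ of them upon returning to $v_0$, yielding a vehicle trip of length exactly $n$ and thus social cost $n$. For the reverse direction, if the optimal social cost is at most $n$, no rider can have used a taxi since even a single taxi trip costs at least $(\alpha+\beta) T_i^0 > n$. Every rider is therefore served by the vehicle, and the vehicle's route is a closed walk from $v_0$ that visits each $v_i$ (for pickup) and reaches $v_0$ with each rider onboard (for dropoff), of length equal to the social cost, and hence at most $n$.

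The main technical obstacle is arguing that a length-$n$ closed walk of this form is forced to be a Hamiltonian cycle, ruling out shorter alternatives that interleave pickups and dropoffs. I would handle this via a simple counting argument: such a walk has $n$ edges and visits $n+1$ vertex positions, with $v_0$ appearing at both endpoints and each of the other $n-1$ vertices appearing at least once; equality forces every non-$v_0$ vertex to appear exactly once and every traversed edge to belong to $E$, which is precisely a Hamiltonian cycle of $G$ through $v_0$. Combined with the large $\alpha$ eliminating taxis and the polynomial size of the construction, this yields the desired NP-hardness.
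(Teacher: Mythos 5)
Your proof is correct and uses the same core gadget as the paper's: an autonomous vehicle acting as the travelling salesman, one rider per vertex whose destination is the vehicle's starting point, $\gamma_i=0$ so that only fuel cost matters, and a taxi price set high enough to kill the outside option. The differences are the source problem and the level of rigor. The paper reduces from TSP as an optimization problem and simply asserts that the optimal vehicle route coincides with the shortest cycle visiting all vertices; it never argues the reverse direction, and since the model fixes $l_e=1$ for every non-loop edge, a reduction from general weighted TSP would additionally require subdividing edges, which interacts awkwardly with the paper's ``one rider on every vertex'' stipulation. You instead reduce from Hamiltonian Cycle to the decision version of RS, which sidesteps the edge-weight issue, is the formally correct object for an NP-hardness claim, and forces you to supply the missing half of the argument: the counting bound showing that a closed walk from $v_0$ of length at most $n$ covering all $n$ vertices must be a Hamiltonian cycle, which is exactly what rules out routes that interleave pickups and dropoffs or revisit $v_0$ early. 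The two reductions prove the same theorem, but yours is self-contained where the paper's leaves the key equivalence as an assertion.
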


\begin{proof}
Consider the NP-hard Travelling Salesman Problem (TSP), i.e., the problem to find the shortest cycle in a graph that visits all vertices. Any TSP can be reduced to a special case of the ridesharing problem where: 
\begin{itemize}
    \item The road network, $\mc{G}=(\mc{V},\mc{E})$,  is the same network as in the TSP. There is one rider on every vertex. 
    \item There is only one vehicle $k$, and its capacity is $w=N$. It can be placed on any vertex of the road network.
    \item The taxi cost is $\alpha>\sum_{e\in \mc{E}}1$, which is too expensive for riders to use.
    \item The vehicle is autonomous.
    \item Riders' destination is equal to the origin of the vehicle as $D_i^p=O_k^c,~\forall i\in\mc{N}$.
    \item $\gamma_i=0,~\forall i\in\mc{N}$,  $\beta=1$ and $T = |\mc{E}|$.
\end{itemize}

As the only cost incurred in this version of the RS problem is the fuel cost, the vehicle will choose the shortest cycle in the graph to pick up all riders. This is the same solution as the original TSP problem. Since any TSP can be reduced to an RS problem in polynomial time, the RS problem is also NP-hard.
\end{proof}

\section{Appendix B: Complete Proofs of Theorems}
\label{sec:proof}

\subsection{Proof of Proposition 2}

\color{\newcolor}

\begin{proof}
The BVCG mechanism can be seen as a VCG mechanism with an additional player 0 added. In more detail, consider an allocation problem among riders (players $1,\ldots,N$) and a ridesharing system (player $0$) who does not move and has no need for cars. The cost function of the system (i.e., player 0) is $c_F(\pi)+c_I(\pi)$. We then use the VCG mechanism for this problem. Since BVCG is an instance of VCG, DSIC and IR of all riders are trivially satisfied. Note that we are not concerned about DSIC and IR of the system player and can ignore its payment. A similar approach was used by \cite{Jehiel2006} for a different setting. 
%\commenteg{To me there seems to be something missing: we also need to show/state that the other agents cannot influence the cost function of player 0, otherwise this would indirectly affect their payment}

%\modaaai{Proofs are omitted due to the limited space. Briefly, this is because BVCG mechanism is a subset of VCG.}

\color{black}

%\ref{thm:bb}
\modified{Now we prove the BB property by assuming autonomous vehicles.}
We first obtain the following lemma, which shows the relationship between the fuel cost and the imaginary cost for a given allocation.

\setcounter{lemma}{4}
\begin{lemma}
	$c_F(\vect{\pi}) \leq c_I(\vect{\pi}) \leq N*c_F(\vect{\pi})$
	\label{thm:imaginary}
\end{lemma}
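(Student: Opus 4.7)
The plan is to reduce both inequalities to a simple counting statement by swapping the order of summation in the definition of $c_I$. Writing
\[
c_I(\pi) \;=\; \sum_{i \in \mc{N}} \sum_{k \in \mc{K}_i(\pi)} \beta\, T_k^c(\pi) \;=\; \sum_{k \in \mc{K}} \beta\, T_k^c(\pi)\, n_k(\pi),
\]
where $n_k(\pi) = |\{i \in \mc{N} : k \in \mc{K}_i(\pi)\}|$ is the number of distinct riders who use vehicle $k$ at some time step under $\pi$, the lemma becomes an assertion that $1 \leq n_k(\pi) \leq N$ (weighted by $\beta T_k^c(\pi)$ and summed over $k$).

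For the upper bound, I would simply use $n_k(\pi) \leq |\mc{N}| = N$ for every $k$, which yields
\[
c_I(\pi) \;\leq\; N \sum_{k \in \mc{K}} \beta\, T_k^c(\pi) \;=\; N \cdot c_F(\pi).
\]

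For the lower bound I would show that $n_k(\pi) \geq 1$ whenever $T_k^c(\pi) > 0$, i.e.\ every moving vehicle serves at least one rider at some time step. Under the autonomous-vehicles setting, this is not automatic for an arbitrary feasible $\pi$, so the argument relies on the context in which the lemma is applied: in BVCG, the optimal allocation minimises $C_I = \sum_i c_i + c_F + c_I$, and if some vehicle $k$ moved with $n_k(\pi)=0$, we could replace its route by the stationary loop at $O_k^c$, strictly decreasing $c_F(\pi)$ without affecting any rider's cost or any capacity constraint, contradicting optimality. Hence we may restrict to allocations satisfying $n_k(\pi) \geq 1$ whenever $T_k^c(\pi) > 0$, from which
\[
c_F(\pi) \;=\; \sum_{k:\, T_k^c(\pi)>0} \beta\, T_k^c(\pi) \;\leq\; \sum_{k:\, T_k^c(\pi)>0} \beta\, T_k^c(\pi)\, n_k(\pi) \;=\; c_I(\pi)
\]
follows immediately.

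The main obstacle is precisely this justification of the lower bound: the inequality $c_F \leq c_I$ is false for a generic feasible $\pi$ (an autonomous vehicle that cruises empty contributes to $c_F$ but not to $c_I$), so the proof hinges on restricting attention to allocations in which no vehicle moves without serving at least one rider. Making this restriction explicit — either as a standing assumption on the allocations considered by BVCG, or as a preprocessing step that stationarises any empty-moving vehicle without affecting $C_I$-optimality — is the crucial step that unlocks the clean counting argument above.
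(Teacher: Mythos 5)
Your proposal is correct, and it is in fact more rigorous than what the paper provides: the paper offers no proof of this lemma at all, only the one-sentence intuition that ``if riders use the same vehicle, everyone's individual imaginary cost can be equal to $c_F$ in the worst case,'' which addresses only the upper bound. Your double-counting identity $c_I(\pi)=\sum_{k\in\mc{K}}\beta T_k^c(\pi)\,n_k(\pi)$ is the natural formalization of that intuition, and the bound $n_k(\pi)\le N$ immediately gives $c_I\le N\,c_F$. More importantly, you have caught a real gap that the paper silently skips: for an arbitrary feasible allocation with autonomous vehicles, a vehicle can travel while carrying no rider at any time step, so $n_k(\pi)=0$ while $T_k^c(\pi)>0$ and the lower bound $c_F\le c_I$ fails. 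Your repair --- observing that the lemma is only ever invoked on $C_I$-optimal allocations (in the proof of the BB property of BVCG), and that at such an optimum any entirely-empty moving vehicle could be made stationary at $O_k^c$ to strictly decrease $c_F$ without affecting any $c_i$ or $c_I$ --- is exactly the argument needed, and note that a vehicle which deadheads only part of the time still has $n_k\ge 1$, so its full fuel cost is counted in $c_I$. The one thing worth making explicit is that the lemma as literally stated (for all $\pi$) should be read as restricted to the allocations BVCG actually produces; with that caveat your proof is complete and strengthens the paper's argument rather than merely reproducing it.
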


Intuitively this is because, if riders use the same vehicle, everyone's individual imaginary cost can be equal to $c_F$ in the worst case.

Moreover, the following lemma describes the change in social cost when a rider is removed from the mechanism.

\begin{lemma}
    Assuming autonomous vehicles, \\
	$C_I(-i,\vect{\pi})-c_I^i(\vect{\pi}) \geq C_I(-i,\vect{\pi}_{-i})$.
	\label{thm:absence}
\end{lemma}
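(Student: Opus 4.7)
The plan is to exhibit an explicit feasible allocation $\pi'$ for the reduced rider set $\mc{N}\setminus\{i\}$ whose imaginary cost satisfies $C_I(-i,\pi')\le C_I(-i,\pi)-c_I^i(\pi)$. The claimed inequality then follows immediately from the optimality of $\pi_{-i}$ under the $C_I$ objective restricted to $\mc{N}\setminus\{i\}$, since $C_I(-i,\pi_{-i})\le C_I(-i,\pi')$.

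For the construction, I would take $\pi$ and delete all of rider $i$'s vehicle assignments (setting $B[t,i,k]=0$ for every $t,k$) while leaving every other rider $j\neq i$ on their original route and in their original vehicles. The vehicles in $\mc{K}_i(\pi)$ split into two classes: those also used by some $j\neq i$, which keep their routes so that $j$'s trip remains feasible, and those used exclusively by $i$, which can simply be left idle at their starting vertex. The autonomous-vehicle assumption is essential for the second class, since a non-autonomous vehicle that was previously moving only on account of rider $i$ could not just stop without violating the constraint that any moving vehicle must carry at least one rider; autonomy lets us shed those vehicles without breaking feasibility anywhere else.

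The cost accounting is where the lemma earns its content. For each $j\neq i$, both $c_j(\pi')$ and $c_F^j(\pi')$ equal their values under $\pi$ because $j$'s route and set of assigned vehicles are unchanged, so the only term of $C_I(-i,\cdot)$ that moves is $c_F$. The idled exclusively-$i$ vehicles remove their $\beta T_k^c(\pi)$ contribution from $c_F$, while the shared vehicles in $\mc{K}_i(\pi)$ are already accounted for inside $\sum_{j\neq i}c_F^j(\pi)$ via the double-counting built into $c_I$. The hard step is showing that these two contributions---the idling savings plus the shared-vehicle terms that already sit on the right-hand side of the target inequality---together dominate $c_I^i(\pi)$. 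I expect to handle this by expanding $c_I^i(\pi)=\sum_{k\in\mc{K}_i(\pi)}\beta T_k^c(\pi)$ and matching each summand either to an idled vehicle or to a term in $\sum_{j\neq i}c_F^j(\pi)$, with Lemma~\ref{thm:imaginary}'s relation $c_F\le c_I$ serving as a sanity check on the overall direction of the inequality. Once this bookkeeping is complete, optimality of $\pi_{-i}$ closes the argument.
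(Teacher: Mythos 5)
Your construction and overall skeleton are the same as the paper's: the paper builds the allocation $\pi\setminus i$ that is identical to $\pi$ except that rider $i$ is removed, notes that autonomy is what makes this feasible, computes its cost, and closes by the optimality of $\pi_{-i}$ --- exactly your plan. The one caveat is the ``hard step'' you anticipate: there is none. In the paper's notation $C_I(-i,\pi)=C_I(\mc{N},\pi)-c_i(\pi)$, i.e.\ it still contains the \emph{full} imaginary cost including $c_F^i(\pi)$, so $C_I(-i,\pi)-c_I^i(\pi)$ is exactly $\sum_{j\neq i}c_j(\pi)+c_F(\pi)+\sum_{j\neq i}c_F^j(\pi)$, which is precisely $C_I(-i,\pi\setminus i)$ when all vehicle routes are kept unchanged (the paper gets an equality here); your additional idling of the vehicles used exclusively by $i$ only lowers $c_F$ further, so your $\pi'$ satisfies the required inequality a fortiori. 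The summand-matching you propose is therefore unnecessary, and would actually be a trap: under the alternative reading in which $C_I(-i,\pi)$ already excludes $c_F^i(\pi)$, the inequality you would need (idling savings $\geq c_I^i(\pi)$) is false whenever $i$ shares a vehicle, so pin down the meaning of $C_I(-i,\cdot)$ before attempting any bookkeeping.
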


\begin{proof}[Proof of Lemma \ref{thm:absence}]
Let $\vect{\pi}\setminus i$ denote an allocation for the set of riders $-i$, which is completely the same as $\vect{\pi}$, with the same set of vehicles and the same routes. The only difference is the absence of rider $i$. Such a route is still feasible because of the autonomous vehicles assumption (where vehicles can move without riders). 

Because of the absence of $i$ in $\vect{\pi}\setminus i$, the system social cost loses the cost related to $i$, which is,

\begin{equation}
 \left.
 \begin{array}{l}
 C_I(-i,\vect{\pi}\setminus i)\\
 =C_I(\mc{N},\vect{\pi})-c_i(\vect{\pi})-c_I^i(\vect{\pi}) \\
 =C_I(-i,\vect{\pi})-c_I^i(\vect{\pi}).
 \end{array}
 \right.
\end{equation}
 
Although $\vect{\pi}\setminus i$ is feasible, it is not necessarily optimal. Due to the optimality of the allocation for $-i$, we have that:
 
\begin{equation}
C_I(-i,\vect{\pi}\setminus i) \geq C_I(-i,\vect{\pi}_{-i})
\end{equation}
 
Then the lemma follows.
\end{proof}

To apply Lemma \ref{thm:absence}, we start from the following term
\begin{equation}
x_i-c_I^i(\vect{\pi})= C_I(-i,\vect{\pi})-C_I(-i,\vect{\pi}_{-i})-c_I^i(\vect{\pi}).
\end{equation}

By Lemma \ref{thm:absence}, this is larger than 0 and then $x_i\geq c_I^i(\vect{\pi})$.

Then, by summing up for all riders, $\sum_{i \in \mc{N}}x_i\geq \sum_{i \in \mc{N}}c_I^i(\vect{\pi})=c_I(\vect{\pi})$. Then by Lemma \ref{thm:imaginary}, $\sum_{i \in \mc{N}}x_i\geq c_F(\vect{\pi})$. \modified{This completes the proof of the proposition.}
\end{proof}

\subsection{Proof of Theorem 1}

\color{\newcolor}
\begin{proof}
First we prove it is a sufficient condition. Assume  \extref{7}. If $\gamma_i \leq \gamma_k \leq \gamma_i'$, reporting $\gamma_i$ can only delay the allocation of $i$ compared to reporting $\gamma_i'$. With slight abuse of notation, let $\Pi_{\leq i}(\gamma)$ denote the set of possible allocations when reporting $\gamma$.  Then, $\Pi_{\leq i}(\gamma_i) \subseteq \Pi_{\leq k}(\gamma_k) \subseteq \Pi_{\leq i}(\gamma_i')$.
%, since $\Pi^i(\gamma_i)$ has more constraints in (\ref{eq:greedyfeas}) than $\Pi^i(\gamma_i')$. 
Then,
$J(\pi_{\leq i}(\gamma_i))\geq J(\pi_{\leq i}(\gamma_i'))$.
Then, because of \extref{7}, 
$V_i[\pi_{\leq i}(\gamma_i)]\leq V_i[\pi_{\leq i}(\gamma_i')]$, and then $V_i[\pi(\gamma_i)]\leq V_i[\pi(\gamma_i')]$.
This proves it is a sufficient condition.

Next, we prove it is also a necessary condition. Assume  \extref{5}.
We will prove this by contradiction, first assuming the negation of \extref{7}, which is
\begin{equation}
\left.
\begin{array}{l}
    J(\pi_{\leq i}(\gamma_a))\leq J(\pi_{\leq i}(\gamma_b))\wedge \\
    V_i[\pi_{\leq i}(\gamma_a)]< V_i[\pi_{\leq i}(\gamma_b)],
    \exists \gamma_a, \gamma_b, i \in \mc{N}.
    \label{eq:neg}
\end{array}
\right.
\end{equation}

The first inequality means $\gamma_b \leq \gamma_a$, as inferred above.
Meanwhile, $\gamma_b \leq \gamma_a$ and monotonicity (\extref{5}) give
    $V_i[\pi_{\leq i}(\gamma_a)]\geq V_i[\pi_{\leq i}(\gamma_b)]$,
which contradicts the second inequality in Equation \ref{eq:neg}.
This proves the necessary condition.
\end{proof}

\subsection{Proof of Lemma 1}

\begin{proof}[Proof of Lemma 1]
    %~\\
    Let $\gamma_a, \gamma_b$ be $i$'s reports and $\gamma_j > \gamma_a \geq \gamma_b$. Since the GARS-N is a GADC, then
    
    \begin{equation}
    J(\pi_{\leq i}(\gamma_a))\leq J(\pi_{\leq i}(\gamma_b)).
    \end{equation}
    
    Meanwhile, the constraints of %(\ref{eq:greedyfeasrs}) 
\extref{8}
    does not allow $i$ to change the allocations and costs of other riders. Then,
    
    \begin{equation}
    \left.
    \begin{array}{l}
    \pi_{\leq i}=\underset{\substack{\pi \in\Pi_{\leq i}}}{\argmin}J(\pi)=\underset{\substack{\pi \in\Pi_{\leq i}}}{\argmin}~c_i(\pi).
    \end{array}
    \right.
    \end{equation}
    And then,
    $c_i(\pi_{\leq i}(\gamma_a)) \leq c_i(\pi_{\leq i}(\gamma_b))$. Since the riders later than $i$ cannot change $c_j(\pi_{\leq i})$ either, $c_i(\pi(\gamma_a)) \leq c_i(\pi(\gamma_b))$. Because of the no taxi constraint, $i$ uses ridesharing and then $T_i$ is proportional to $c_i$. Then 
    \begin{equation}
    T_i(\pi(\gamma_a)) \leq T_i(\pi(\gamma_b)).
    \label{eq:gars_alloc}
    \end{equation}

    Since the amount of the  allocation is the negative travel time as $V_i=-T_i$, we have proved 
    %(\ref{eq:consist})
    \extref{7}. This is equivalent to monotonicity according to Theorem 1.
    %\ref{thm:nsmono}
    Note that, if $i$ could take a taxi to reduce $T_i$, then (\ref{eq:gars_alloc}) could be violated since a taxi follows the shortest path.
\end{proof}

\color{black}

\subsection{Proof of Lemma 2}
%\ref{thm:garsn_dsic}
\begin{proof}
First, we prove that the payment scheme %in \textproc{GreedyMechanism}
is a necessary condition of incentive compatibility. From the definition of incentive compatibility,
\begin{equation}
 \left.
 \begin{array}{ll}
&-\hat{c}_i(\gamma_i,\gamma_{-i})-x_i(\gamma_i,\gamma_{-i}) \geq -\hat{c}_i(\gamma_i',\gamma_{-i})-x_i(\gamma_i',\gamma_{-i}) \\
\therefore &(\alpha+\beta)T_i^0-\hat{T}_i(\gamma_i,\gamma_{-i})\gamma_i-x_i(\gamma_i,\gamma_{-i}) \\
&\geq (\alpha+\beta)T_i^0-\hat{T}_i(\gamma_i',\gamma_{-i})\gamma_i-x_i(\gamma_i',\gamma_{-i}) \\
\therefore& [\hat{T}_i(\gamma_i',\gamma_{-i})-\hat{T}_i(\gamma_i,\gamma_{-i})]\gamma_i \geq x_i(\gamma_i,\gamma_{-i})-x_i(\gamma_i',\gamma_{-i})
 \end{array}
 \right.
\end{equation}

Let $0<y<z$. Then,
%\begin{equation}
$[\hat{T}_i(z)-\hat{T}_i(y)]y \geq x_i(y)-x_i(z)$ 
%\end{equation}
and also,
%\begin{equation}
$[\hat{T}_i(y)-\hat{T}_i(z)]z \geq x_i(z)-x_i(y)$. 
%\end{equation}
Then,
%\begin{equation}
$[\hat{T}_i(z)-\hat{T}_i(y)]y \geq x_i(y)-x_i(z) \geq [\hat{T}_i(z)-\hat{T}_i(y)]z$. 
%\end{equation}
Then, when $y\to z$ at a jumping point $z$,
%\begin{equation}
$-\Delta\hat{T}_i(z)z=\Delta x_i(z)$.
%\end{equation}
Then, the incentive compatible payment must be

\begin{equation}
x_i(\gamma_i,\gamma_{-i})=-\underset{\substack{j \in {\rm jumps} \\ z_j \leq \gamma_i}}{\sum}z_j\Delta \hat{T}_i(z_j,\gamma_{-i}).
\label{eq:monopay}
\end{equation}

The payment 
%computed in \textproc{GreedyMechanism} 
satisfies this property. Next, we prove the payment is also a sufficient condition. This payment results in incentive compatibility when the allocation is monotone as in the Figure  \ref{fig:gic}. In the figure, $\gamma_i$ is the true type of $i$ and $\hat{\gamma}_i$ is reported value. Then the cost of $i$ is the area shown in gray and the payment is the area in black. Then the utility is maximized when $i$ reports its type honestly.
%The lower side simplifies the upper side by canceling out the blue area. 
\end{proof}

\begin{figure}[t]
\centering
\includegraphics[width=1.0\columnwidth]{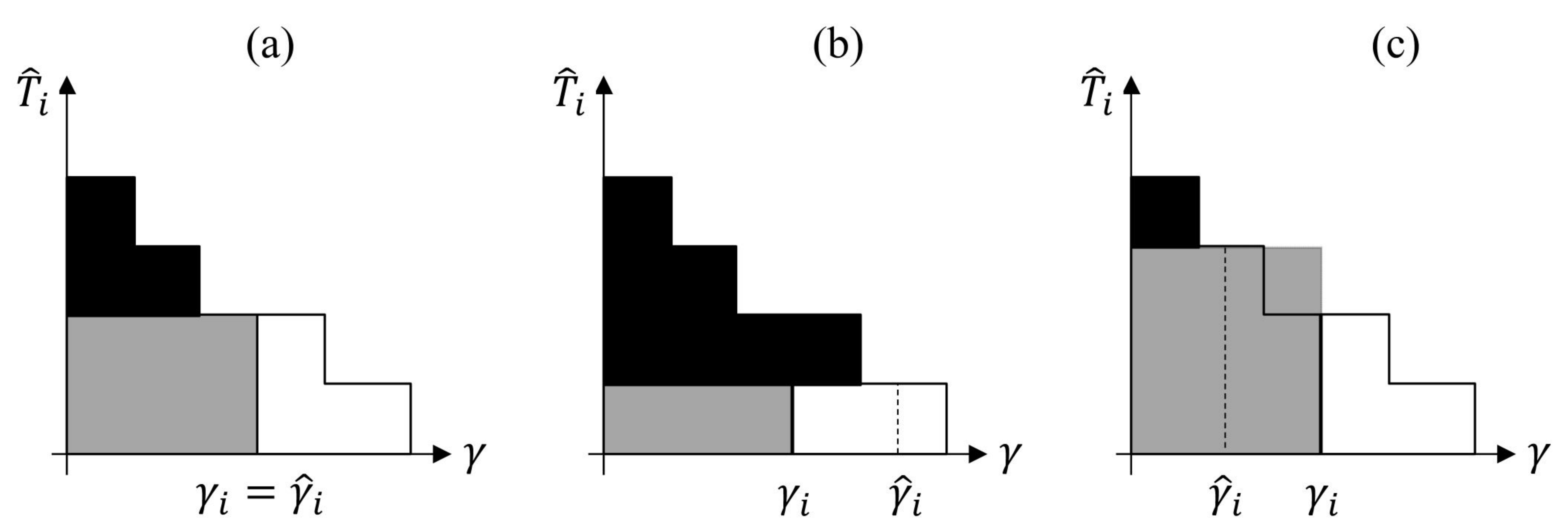}
\caption{Incentive compatibility with monotone allocation \\
(a) true bidding, (b) over bidding and (c) under bidding}
\label{fig:gic}
\end{figure}

\subsection{Proof of Lemma 3}
%\ref{thm:gbb}
\begin{proof}
Since the base payment $x_i^0$ is computed by splitting $c_{Fub}$ and $x_i\geq x_i^0$,
%computation \textproc{BasePayment}, and since $\Delta \vect{x}\geq \vect{0}$,
\begin{equation}
\underset{\substack{i \in \mc{N}}}{\sum}x_i \geq \underset{\substack{i \in \mc{N}}}{\sum}x_i^0 = c_{Fub} \geq c_{F}.
\label{eq:gbbp}
\end{equation}
\end{proof}

\subsection{Proof of Lemma 4}
%\ref{thm:irs}
\begin{proof}
\begin{equation}
\left.
\begin{array}{l}
[(\alpha+\beta)T_i^0-x_i^0]/\gamma_{max}\geq \hat{T}_i(\vect{\pi}(0))\\
\therefore (\alpha+\beta)T_i^0\geq \gamma_{max}\hat{T}_i(\vect{\pi}(0))+x_i^0\geq \gamma_i\hat{T}_i(\vect{\pi}(0))+x_i^0\\
\therefore (\alpha+\beta)T_i^0-\gamma_i\hat{T}_i(\vect{\pi})\geq  \gamma_i[\hat{T}_i(\vect{\pi}(0))-\hat{T}_i(\vect{\pi})]+x_i^0 \geq x_i(\vect{\pi})\\
\therefore (\alpha+\beta+\gamma_i)T_i^0\geq \gamma_iT_i(\vect{\pi})+x_i(\vect{\pi})\\
\therefore c_i^0\geq c_i(\vect{\pi})+x_i(\vect{\pi})\\
\end{array}
\right.
\end{equation}
\end{proof}

\section{Appendix C: Details of Experiments}
\label{sec:expdetail}

\begin{itemize}
    \item We call numpy.random.seed(1) at the beginning of the main code. 
    \item We use CPLEX 12.7.1, Python 2.7.16, Red Hat Enterprise Linux Server release 6.10 and Intel Xeon CPU E5-2670 (2.60 GHz) with 8 cores, 132 GB memory to run the experiments.
    \item We estimate a realistic value for the cost parameter ${\bf \alpha}$ (wage of taxi drivers), with data from London taxis \cite{londonspeed,londonwage}. Specifically, fuel price: 124 [p/liter], wage of driver: 836 [GBP/week], working hour of driver: 50 [h/week], taxi speed: 30 [km/h] and mileage: 11.7 [km/liter]. Then, the estimated wage of taxi drivers is
    ${\bf 836/50/30*100=55.7}$ [p/km] and fuel cost is ${\bf 124/11.7=10.5}$ [p/km]. If we normalize the fuel cost as ${\bf \beta=1}$, then the normalized wage is ${\bf 55.7/10.5=5.3}$. Thus, we use ${\bf \alpha=5}$ as a moderate value.
\end{itemize}

\color{\cameracolor}
\begin{algorithm}                      
	\caption{Payment Parameter}
	\label{alg:pp}
	\begin{algorithmic}[1]
		\Procedure{$c_{Fub}=$PaymentParam}{}
	    \State{$c_{Fub}=0$}
		\For{$i \in \mc{N}$} \label{alg:pp.dev1}
		\For{$order_i \in [0,\ldots,N]$}
		\For{$j \in -i$}
		\For{$order_j \in [0,\ldots,N]$}
    	\State{$\mc{N}'=\mc{N},\vect{\gamma}'=\vect{\gamma}$}
    	\State{$k=\mc{N}[order_i]$}
		\State{$\mc{N}'.{\rm pop}(\mc{N}.index(i))$}
		\State{$\mc{N}'.{\rm insert}(order_i,i)$}
		\label{alg:pp.movei}
		\State{$\gamma_i'=\gamma_k$}
    	\State{$k=\mc{N}[order_j]$}
		\State{$\mc{N}'.{\rm pop}(\mc{N}.index(j))$}
		\State{$\mc{N}'.{\rm insert}(order_j,j)$}
		\label{alg:pp.movej}
		\State{$\gamma_j'=\gamma_k$}
		\label{alg:pp.dev2}

    	\State{$\pi=\textproc{GreedyAlloc}(\vect{\gamma}',\mc{N}')$} \label{alg:pp.alloc}
    		\If{$c_{Fub} < c_F(\pi)$}
    		\State{$c_{Fub} = c_F(\pi)$} \label{alg:pp.cfmax}
    		\EndIf
		\EndFor
		\EndFor
		\EndFor
		\EndFor
		\EndProcedure
	\end{algorithmic}
\end{algorithm}

\begin{algorithm}                      
	\caption{Base Payment}
	\label{alg:bp}
	\begin{algorithmic}[1]
		\Procedure{$\vect{x}^0=$BasePayment}{$\hat{\vect{T}}_{min},c_{Fub}$}
		\State{Compute loss: $\Delta \vect{T}=\vect{T}^0-\hat{\vect{T}}_{min}$} \label{alg:bp.t}
		\State{Normalization: $\Delta \vect{T}=\Delta \vect{T}-\min(\Delta \vect{T})$} \label{alg:bp.nt}
		\If{$\sum(\Delta \vect{T})==0$}
		    \State{$x_i^0=c_{Fub}/N, ~\forall i \in \mc{N}$}
		\Else
		    \State{$k=c_{Fub}/\sum(\Delta \vect{T})$} \label{alg:bp.coef}
		    \State{$x_i^0=k\Delta T_i,~\forall i \in \mc{N}$} \label{alg:bp.prop}
		\EndIf
		\EndProcedure
	\end{algorithmic}
\end{algorithm}

The \textproc{PaymentParam} function (Algorithm \ref{alg:pp}) shows a heuristics to estimate the upper bound of $c_F$.
It searches pairs of deviations of every two riders (line \ref{alg:pp.dev1}-\ref{alg:pp.dev2}) and computes allocations (line \ref{alg:pp.alloc}) to find the upper bound of the fuel cost $c_{Fub}$ (line \ref{alg:pp.cfmax}). 
%Since those parameters are independent from the reports $\vect{\gamma}$, the function \textproc{BasePayment} (Algorithm \ref{alg:bp}) does not break incentive compatibility brought by the monotone allocation. 
The function \textproc{BasePayment} (Algorithm \ref{alg:bp}) shows how to compute the base payment $\vect{x}^0$.
%by splitting $c_{Fub}$ among riders. 
Riders with longer trips and less waiting time pay more. For this purpose, it computes the normalized travel time (line \ref{alg:bp.t},\ref{alg:bp.nt}) and makes the base payment proportional to it (line \ref{alg:bp.coef},\ref{alg:bp.prop}).

\end{appendices}

\fi

%\input{firstsubmit} % first submission

%\input{tmp}
%\input{old/body.20201231}
%\input{old/suppl.20201231}
%\input{old/body.20201222}
%\input{template}

%% The file named.bst is a bibliography style file for BibTeX 0.99c

%\onecolumn
%\input{coverletterbody}
%\input{reviewAAAI}
%\includepdf[pages=-]{AAAI2021/AAAI20200909.pdf}
%\includepdf[pages=-]{AAAI2021/supplemental20200909.pdf}

\end{document}